\documentclass[letterpaper]{article} 
\usepackage{aaai2027}  
\usepackage[hyphens]{url}  
\usepackage{graphicx} 
\usepackage{natbib}  
\usepackage{caption} 
\usepackage{tikz}
\usetikzlibrary{positioning,arrows.meta, fit}
\usepackage{amsmath}
\usepackage{amssymb}

\usepackage{booktabs}
\usepackage{tabularx}
\usepackage{multirow}
\usepackage{siunitx}

\usepackage{amsthm,amssymb}
\theoremstyle{definition}
\newtheorem{definition}{Definition}
\newtheorem{hypothesis}{Hypothesis}

\usepackage{algorithm}
\usepackage{algpseudocode}

\usepackage{newfloat}
\usepackage{listings}
\DeclareCaptionStyle{ruled}{labelfont=normalfont,labelsep=colon,strut=off} 
\floatstyle{ruled}
\newfloat{listing}{tb}{lst}{}
\floatname{listing}{Listing}

\usepackage{bm}

\newtheorem{proposition}{Proposition}

\usepackage{acronym}
\acrodef{DFM}{Discriminative Flow
Matching}
\acrodef{CFM}{Conditional Flow Matching}
\acrodef{FM}{Flow Matching}
\acrodef{SE}{speech enhancement}
\acrodef{CNF}{Continuous Normalizing Flow}
\acrodef{RF}{Rectified Flows}
\acrodef{NFE}{number of function evaluations}
\acrodef{SNR}{signal-to-noise ratio}
\acrodef{OT-CFM}{Optimal Transport-CFM}  
\acrodef{DFSR}{Discriminative Flow-State Representations}

\newcommand{\Wtwo}{\mathcal{W}_2}
\newcommand{\flowstate}{\mathcal{S}}
\newcommand{\pdata}{p_1}

\title{Discriminative Flow Matching: Beyond Time-Conditioning in Generative Restoration via Flow-State Representations}
\author{Shrishti Saha Shetu$^{1}$,
Emanu\"{e}l A. P. Habets$^{1,2}$, Andreas Brendel$^{1}$
}
\affiliations{$^{1}$International Audio Laboratories Erlangen\thanks{A joint institution of Friedrich-Alexander-Universit\"{a}t Erlangen-N\"{u}rnberg
(FAU) and
Fraunhofer
IIS.}, Am Wolfsmantel 33, 91058 Erlangen, Germany \\
    $^{2}$Fraunhofer IIS, Am Wolfsmantel 33, 91058 Erlangen, Germany

}

\begin{document}

\maketitle

\begin{abstract}

Existing Conditional Flow Matching (CFM) formulations describe transport progress using an explicit interpolation coordinate, commonly interpreted as a time coordinate, implicitly assuming that a single global variable adequately describes a sample's position along the generative trajectory. However, in restoration tasks, transport progress is inherently sample-dependent, as the initial distribution often exhibits varying statistical dependencies with the target distribution. Consequently, samples at the same interpolation coordinate may differ substantially in degradation level, distance of their marginal distribution to the target distribution, and restoration difficulty. In this work, we investigate whether signal representations learned by discriminatively trained models provide a meaningful description of the state of generative transport within the context of CFM-based restoration tasks. Through a systematic latent-space analysis of discriminative restoration models, we demonstrate that discriminative representations organize according to degradation severity and follow a consistent trajectory toward the clean-data manifold during the generative process. Motivated by these observations, we introduce the Discriminative Flow-State Hypothesis, which suggests that discriminative representations encode a transport state that governs generative restoration. Based on this hypothesis, we propose Discriminative Flow Matching, a framework that conditions the Flow-Matching velocity field on Discriminative Flow-State Representations instead of explicit time coordinates. Experiments on speech enhancement and image denoising demonstrate that these representations sufficiently characterize restoration progress, support adaptive inference, and consistently outperform various CFM and diffusion-related baselines. Our findings suggest that discriminative representations provide an effective state-aware alternative to explicit time conditioning and offer a novel perspective on the relationship between discriminative and CFM-based generative modeling.

\end{abstract}


\section{Introduction}

The task of many deep generative models can be viewed as the problem of transporting probability mass from a source distribution toward a target distribution. Recent advances in diffusion models, score-based generative models, and Flow-based methods have achieved remarkable success across image synthesis, audio generation, restoration, and inverse problems~\cite{ho2020ddpm,song2021score,lipman2023flow}. Among these approaches, \ac{CFM}~\cite{lipman2023flow} has emerged as an attractive framework due to its simple training objective, stable optimization, and  efficient training. \ac{CFM} defines a tractable probability path conditioned on the target data and derives the associated conditional vector field in closed form. A neural network, hereafter referred to as the velocity network, can therefore be trained to predict this vector field at sampled time coordinates $t\in[0,1]$, where $t$ parameterizes progression along the probability path, without numerically integrating the transport dynamics during training. Consequently, \ac{CFM} has been successfully applied to speech restoration, including \ac{SE}~\cite{cross25a,lee2025flowse,wang2026rethinking}, audio-visual \ac{SE}~\cite{jung2024flowavse}, and target speaker extraction~\cite{hsieh2025adflowtse}.

However, in classical \ac{CFM} the time coordinate $t$ serves two distinct roles: it defines the interpolation  used to construct the conditional probability path and also conditions the velocity network on the corresponding transport state. This temporal conditioning is essential because the target vector field is time-dependent, i.e., it varies along the transport trajectory, reflecting the evolving transport dynamics as the marginal distribution approaches the target distribution. In restoration tasks, however, the source distribution of input signals, which defines the initial state of the generative process at $t=0$, does not provide an equivalent starting point across individual transport paths. Unlike classical \ac{CFM}, where source and target distributions are typically statistically independent (e.g., a standard Gaussian and the distribution of natural images), restoration tasks involve a target signal that is intrinsically contained within the source (e.g., a distorted or noisy target signal).  Because this statistical dependence varies across samples and degradation conditions, two samples with the same $t$ may differ substantially in degradation level and proximity to the target manifold. Consequently, although $t$ remains essential for constructing the conditional probability path during training, it may not unambiguously characterize the current transport state required for conditioning the velocity network. Recent works argue that time-conditioned models must learn trajectory-specific noise scales and may consequently overfit to the prescribed transport path. To address this, they explored time-unconditional formulations by learning a time-independent velocity field without conditioning on $t$~\cite{zhang2026arf, zhang2026generative}. However, removing time conditioning also eliminates an explicit mechanism for adapting the  predicted velocity to the current transport progress. This naturally raises the question of whether transport progress can instead be better inferred from a learned representation of the current sample state.

Representation learning offers a promising direction for overcoming the limitations of existing \ac{CFM}-based restoration frameworks. Latent representations learned through unsupervised and self-supervised reconstruction objectives can capture informative semantic and structural properties of the underlying target signal, as shown by stacked denoising autoencoders~\cite{vincent2010stacked}, context encoders~\cite{pathak2016context}, and masked autoencoders~\cite{he2022masked}. Recent speech restoration methods have further leveraged such representations by combining discriminative and generative models through hybrid architectures or auxiliary conditioning~\cite{abdulatif2024cmgan,li2024diffusion,wang2024gald,zhang2026hyflowse,shetu2026discogan}. Unlike these approaches, which primarily employ discriminative models as auxiliary feature extractors, we view their latent representations as task-relevant encodings that retain information helpful for restoration. We further hypothesize that these representations implicitly encode the sample's restoration state, since successful restoration depends on the relationship between a degraded sample and its target. This motivates the Discriminative Flow-State Hypothesis, which postulates that discriminative representations encode information about the underlying state of a generative transport process. Building upon this hypothesis, we propose \ac{DFM}, which conditions the velocity field on discriminative representations extracted from the iteratively enhanced sample rather than on explicit time $t$, enabling sample-adaptive restoration.

We evaluate \ac{DFM} mainly on \ac{SE} as a representative restoration task. Our analysis shows that discriminative representations capture restoration progress and can be used to estimate the computational effort required for signal enhancement. Leveraging these representations within \ac{CFM}, our proposed \ac{DFM} consistently improves restoration performance while naturally enabling adaptive inference. The main contributions of this work are summarized as follows:

\begin{itemize}

\item We motivate and  formulate the Discriminative Flow-State Hypothesis and demonstrate that discriminative  representations encode generative restoration state and vary consistently with the distributional proximity of samples to the target distribution.

\item We propose \ac{DFM}, a \ac{CFM} framework that conditions the velocity field on discriminative Flow-State representations instead of explicit time conditioning.

\item We validate \ac{DFM} on \ac{SE} and image denoising, demonstrating consistent improvements over  conventional \ac{CFM} and other baselines while naturally supporting adaptive inference with sample-dependent computational budgets.

\end{itemize}

\section{Flow-State Theory}
\label{sec:theory}
The central question of this work is whether representations learned by a discriminative restoration model contain sufficient information to characterize the state of a generative transport process. If so, these representations provide a learned, sample-dependent alternative to the explicit time conditioning for velocity network in conventional \ac{CFM} \cite{lipman2023flow, albergo2022building}.  

\ac{CFM} formulates generative modeling as a continuous transport problem, where a learned velocity field guides probability mass from a source distribution $p_0$ to a target data distribution $p_1$. This formulation constructs a probability path $\{p_t\}_{t \in [0, 1]}$ describing how the marginal distribution evolves along this transport. We define $x_0 \sim p_0$ as the initial state (the source observation) at time $t=0$, and $x_1 \sim p_1$ as the target sample at time $t=1$. For any intermediate time coordinate $t \in (0, 1)$, we further define the intermediate state $x_t \sim p_t$, which is constructed commonly via a linear interpolation path  $x_t = (1 - t)x_0 + t x_1$ and subsequently the velocity network $v_\theta$, parameterized by $\theta$,  is optimized with the
following training objective \cite{lee2025flowse}
\begin{equation}
    \mathcal{L}_{\text{CFM}}(\theta)= \mathbb{E} \left[ \Vert v_\theta(x_t, t,x_0) - (x_1 - x_0) \Vert^2 \right].
\end{equation}
%
%
To quantify generation progress, we define the following:
\begin{definition}[Flow-State]
    We define the Flow-State as the Wasserstein-2 distance between the intermediate marginal distribution $p_t$ and the target data distribution $p_1$
\begin{equation}
\begin{aligned}
    \mathcal{S}(p_t) &\triangleq \Wtwo(p_t, p_1) \\
    &= \left( \inf_{\pi \in \Pi(p_t, p_1)} \mathbb{E}_{(x_t, x_1) \sim \pi} \left[ \|x_t - x_1\|^2 \right] \right)^{1/2},
\label{eq: Wassertein2}
\end{aligned}
\end{equation}
where $\Pi(p_t, p_1)$ denotes the set of all joint distributions with marginals $p_t$ and $p_1$.
\end{definition}
This formulation provides a principled measure of generative progress, as the probability path evolves toward the target distribution, the Wasserstein-2 distance quantifies the remaining distance.

\subsection{Flow-State Properties}
Under standard \ac{OT-CFM}~\cite{lipman2023flow} with independent initialization, e.g., by a standard Gaussian, the target flow trajectory is $x_t = (1-t)x_0 + tx_1$, where $(x_0, x_1) \sim \pi_{\text{OT}}$ and $\pi_{\text{OT}}$ denotes the joint distribution that minimizes the expected transport cost (i.e., the coupling attaining the squared Wasserstein-2 distance $\Wtwo^2(\cdot,\cdot)$)~\cite{mccann1997convexity}.

\begin{proposition}[Properties of Flow-State under \ac{OT-CFM}]
For a transport trajectory governed by $\pi_{\text{OT}}$, the Flow-State $\flowstate(p_t)$ satisfies
\begin{equation}
    \flowstate(p_t) = (1-t) \Wtwo(p_0, \pdata)
    \label{eq:theoremflowstate}
\end{equation}
and $\flowstate(p_t)$ adheres to the following properties $\forall t \in [0, 1]$:
\begin{enumerate}
    \item \textbf{Boundedness:} $0 \le \flowstate(p_t) \le \Wtwo(p_0, \pdata)$.
    \item \textbf{Monotonicity:} $\frac{d}{dt} \flowstate(p_t) = - \Wtwo(p_0, \pdata) < 0$, $\forall t < 1$.
\end{enumerate}
\end{proposition}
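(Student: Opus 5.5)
The plan is to show that the displacement interpolation generated by $\pi_{\text{OT}}$ is a constant-speed geodesic in $(\mathcal{P}_2,\Wtwo)$, in the sense of McCann~\cite{mccann1997convexity}, and then read off both properties from the closed form \eqref{eq:theoremflowstate}. Write $W \triangleq \Wtwo(p_0,\pdata)$ and $p_t = \mathrm{Law}((1-t)x_0 + t x_1)$ with $(x_0,x_1)\sim\pi_{\text{OT}}$. Throughout I assume $p_0,\pdata$ have finite second moments, so that all quantities are finite and an optimal coupling exists.

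First, the upper bound. The pair $(x_t,x_1)$ is a coupling of $p_t$ and $\pdata$, so
\begin{equation*}
\flowstate(p_t)^2 \le \mathbb{E}\|x_t - x_1\|^2 = (1-t)^2\,\mathbb{E}\|x_0-x_1\|^2 = (1-t)^2 W^2 .
\end{equation*}
This gives $\flowstate(p_t)\le (1-t)W$. The same argument with $(x_0,x_t)$ gives $\Wtwo(p_0,p_t)\le tW$.

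Second, the lower bound, via the triangle inequality for $\Wtwo$:
\begin{equation*}
W \le \Wtwo(p_0,p_t) + \Wtwo(p_t,\pdata) \le tW + \flowstate(p_t).
\end{equation*}
Hence $\flowstate(p_t)\ge (1-t)W$, and equality \eqref{eq:theoremflowstate} follows. The endpoints $t=0,1$ hold trivially. The only nontrivial ingredient is that $\Wtwo$ is a metric on $\mathcal{P}_2$, in particular that it satisfies the triangle inequality. This is standard and follows from the gluing lemma, so I would cite it rather than reprove it. I expect no real obstacle here. The step that needs care is using the \emph{optimal} coupling in the first bound: with an arbitrary (e.g.\ independent) coupling one only obtains an upper bound that may exceed $(1-t)W$, and the squeeze fails.

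Third, the properties. Boundedness is immediate from $t\in[0,1]$ and $W\ge 0$. For monotonicity, differentiating the affine function gives $\frac{d}{dt}\flowstate(p_t) = -W$. This is strictly negative exactly when $W>0$, that is when $p_0\neq\pdata$. I would state that nondegeneracy assumption explicitly, since in the degenerate case $p_0=\pdata$ the derivative is $0$.

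Finally, I would remark that the restriction $t<1$ in the statement is not needed for the derivative itself. At $t=1$ the one-sided derivative is still $-W$; it is the boundedness lower bound $0$ that becomes attained there.
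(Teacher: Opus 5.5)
Your proof is correct, and it reaches the lower bound $\flowstate(p_t)\ge(1-t)\Wtwo(p_0,\pdata)$ by a different route from the paper. The upper bound is the same in both: you use the coupling $(x_t,x_1)$ induced by $\pi_{\text{OT}}$.

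For the reverse inequality, the paper argues by contradiction. It takes a hypothetically cheaper plan $\pi_t'\in\Pi(p_t,\pdata)$ and pulls it back through $x_0=(x_t-tx_1)/(1-t)$ to a plan $\pi_0'$. It then claims that $\pi_0'$ would beat $\Wtwo^2(p_0,\pdata)$. You instead prove the companion bound $\Wtwo(p_0,p_t)\le t\,\Wtwo(p_0,\pdata)$ and squeeze with the triangle inequality, which is the standard proof that McCann's displacement interpolation is a constant-speed geodesic.

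Your route is the one that actually closes the argument. In the paper's version, the first marginal of $\pi_0'$ is the law of $(x_t-tx_1)/(1-t)$ under $\pi_t'$. That law depends on the joint law of $(x_t,x_1)$, not only on its marginals, so in general it is not $p_0$. For example, take $p_0=\pdata=\tfrac12(\delta_{-1}+\delta_1)$ and the anti-coupling $x_1=-x_t$; the pulled-back marginal then sits at $\pm\tfrac{1+t}{1-t}$. Hence $\pi_0'\notin\Pi(p_0,\pdata)$, and no contradiction with the definition of $\Wtwo(p_0,\pdata)$ follows.

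Repairing the paper's argument requires gluing $\pi_t'$ to the $(x_0,x_t)$ plan from $\pi_{\text{OT}}$ along $p_t$, which is exactly the triangle inequality you cite. The paper's route would be self-contained if it worked; yours is rigorous at the modest cost of invoking the metric property of $\Wtwo$ (via the gluing lemma).

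Two further points in your favour. Your explicit nondegeneracy assumption $p_0\neq\pdata$ is genuinely needed for the strict inequality in Property 2, which the paper asserts unconditionally. Your remark that the one-sided derivative at $t=1$ is still $-\Wtwo(p_0,\pdata)$ is also correct.
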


\begin{proof}[Proof Sketch] 
The infimum of $\mathbb{E}[\|x_t - x_1\|^2] = (1-t)^2 \mathbb{E}[\|x_0 - x_1\|^2]$ yields the upper bound $\Wtwo^2(p_t, p_1) \le (1-t)^2 \Wtwo^2(p_0, p_1)$. Equality is proven by contradiction, as lower cost 
violates $\pi_{\text{OT}}$'s optimality. Properties 1 and 2 follow from \eqref{eq:theoremflowstate}. See Appendix~A1.1 for the complete proof.
\end{proof}

Proposition 1 shows that, under the OT-CFM conditions, the Flow-State simplifies to a linear scaling of the initial Wasserstein-2 distance over time; hence, using the time $t$ as a surrogate for the Flow-State $\flowstate(p_t)$ is mathematically justified for the standard \ac{OT-CFM} formulation.


\subsection{Ambiguity of Time $t$  in Restoration Tasks}

In restoration tasks, the initial state $x_0$  is typically a mixture of the target signal and a degradation component (e.g., additive noise), with signal scaling and \ac{SNR} varying across samples. In this section, we show that these variations render the global interpolation coordinate $t$ ambiguous as a universal descriptor of the Flow-State $\flowstate(p_t)$,  necessitating a new representation to characterize the restoration progress along the transport trajectory.

\begin{proposition}[Ambiguity of Time $t$]
\label{prop:Ambiguity}
Let the initial state be defined as $x_0^{(\alpha,\beta)} = \alpha x_1 + \beta \nu$, where $\alpha, \beta \in \mathbb{R}^+$ are scaling parameters, $\nu$ is an additive noise signal, and $x_1$ is the target clean signal. For the linear interpolation path $x_t^{(\alpha,\beta)} = (1-t)x_0^{(\alpha,\beta)} + t x_1$ with $t\in[0,1)$, there exists a distinct parameter set $\{\alpha', \beta', t'\}$ with $t \neq t'$ and $(\alpha, \beta) \neq (\alpha', \beta')$ such that
\begin{equation}
p\left(x_t^{(\alpha,\beta)}\right) = p\left(x_{t'}^{(\alpha',\beta')}\right).
\end{equation}

\end{proposition}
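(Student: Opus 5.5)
The plan is to compute $x_t^{(\alpha,\beta)}$ explicitly as a deterministic function of $(x_1,\nu)$ and then match coefficients. Substituting $x_0^{(\alpha,\beta)} = \alpha x_1 + \beta\nu$ into the linear path gives
\begin{equation*}
x_t^{(\alpha,\beta)} = \bigl((1-t)\alpha + t\bigr)x_1 + (1-t)\beta\,\nu .
\end{equation*}
Hence the law of $x_t^{(\alpha,\beta)}$ is the push-forward of the joint law of $(x_1,\nu)$ under $(x_1,\nu)\mapsto a x_1 + b\nu$, with effective coefficients $a(\alpha,t) = (1-t)\alpha + t$ and $b(\beta,t) = (1-t)\beta$. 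The joint law of $(x_1,\nu)$ does not depend on $(\alpha,\beta,t)$. So it suffices to exhibit $(\alpha',\beta',t')$ with $t'\neq t$, $(\alpha',\beta')\neq(\alpha,\beta)$, and
\begin{equation*}
a(\alpha',t') = a(\alpha,t), \qquad b(\beta',t') = b(\beta,t).
\end{equation*}
Then the two random variables are the same measurable function of $(x_1,\nu)$, and their distributions coincide.

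Next I solve these two equations. Fix any $t'\in[0,1)$ with $t'\neq t$; the feasible range is characterised below. Setting $\beta' = \beta\,\frac{1-t}{1-t'}$ gives $\beta'>0$ automatically. Setting $\alpha' = \frac{(1-t)\alpha + t - t'}{1-t'}$ matches the signal coefficient. Because $t'\neq t$, we get $\beta'\neq\beta$ (since $\beta>0$), so $(\alpha',\beta')\neq(\alpha,\beta)$ holds immediately. The only remaining constraint is $\alpha'\in\mathbb{R}^+$, which is equivalent to $t' < (1-t)\alpha + t$.

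The main obstacle is ensuring that admissible choices of $t'$ actually exist. The right-hand side $(1-t)\alpha + t$ is strictly positive, because $t<1$ and $\alpha>0$. Therefore every $t'\in[0,\min\{1,(1-t)\alpha+t\})\setminus\{t\}$ works. This interval has positive length, so it contains points other than $t$.

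One simple explicit choice is $t' = t/2$ when $t>0$, and $t' = \min\{\alpha,1\}/2$ when $t=0$. Both satisfy the bound, since $t/2 < t \le (1-t)\alpha + t$ in the first case and $t' < \alpha$ in the second. I would close by noting that the construction in fact yields a one-parameter family of such triples. This shows that $t$ alone cannot identify $\flowstate(p_t)$ once $(\alpha,\beta)$ vary across samples, which is the intended interpretation of Proposition~\ref{prop:Ambiguity}.
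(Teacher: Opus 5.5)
Your proposal is correct and follows the paper's proof: expand the path, match the coefficients of $x_1$ and $\nu$ to get $\beta' = \frac{1-t}{1-t'}\beta$ and $\alpha' = \frac{(1-t)\alpha + t - t'}{1-t'}$, then show that an admissible $t'$ exists. Your existence step is tighter than the appendix version in three ways: you use the strict condition $t' < (1-t)\alpha + t$ that $\alpha' > 0$ actually requires, you explicitly exclude $t' = t$ via the positive-length interval and concrete choices, and you derive $\beta' \neq \beta$ from $t' \neq t$ rather than the reverse.
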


\begin{proof}[Proof Sketch]
Expanding the interpolation path yields $x_t^{(\alpha,\beta)} = ((1-t)\alpha + t)x_1 + (1-t)\beta \nu$. For 
$x_t^{(\alpha,\beta)} = x_{t'}^{(\alpha',\beta')}$, matching the coefficients of $x_1$ and $\nu$ gives
\begin{equation}
    \beta' = \frac{1-t}{1-t'}\beta, \quad \alpha' = \frac{(1-t)\alpha + t - t'}{1-t'}.
\end{equation}
For any given $\alpha, \beta > 0$ and $t \in [0,1)$, choosing a distinct $t' \in [0,1)$ such that $t' < (1-t)\alpha + t$ yields valid parameters $\alpha', \beta' > 0$ with $t' \neq t$. Thus, identical states, and, hence, marginal distributions are obtained at different time coordinates. See Appendix~A1.2 for the complete proof.
\end{proof}
This proposition confirms that identical marginal distributions are reached at different times $t$, meaning that the same Flow-State $\flowstate(p_t)$ can be achieved at distinct time steps simply by varying the weighting factors $(\alpha, \beta)$. Consequently, $t$ cannot serve as an unambiguous descriptor of restoration progress. This ambiguity provides the primary motivation for the Discriminative Flow-State Hypothesis, which suggests replacing the time coordinate $t$ with a learned representation of the sample's restoration state.

\begin{proposition}[Upper Bound of Flow-State]
\label{prop:upper_boundmain}
Assuming without loss of generality  that $\mathbb{E}\|x_1\|^2 = \mathbb{E}\|\nu\|^2 = 1$, 
we obtain
\begin{equation}
    \flowstate(p_t) \leq |1-t| (|\alpha-1| + |\beta|)
\end{equation}
and in the special case of a convex combination, for $\beta = 1-\alpha$, where $x_0 = \alpha x_1 + (1-\alpha)\nu$ with $\alpha \in [0, 1]$
\begin{equation}
\Wtwo(p_t, p_1) \leq 2 |1-t| |\alpha-1|.
\end{equation}
\end{proposition}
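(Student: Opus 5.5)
The plan is to bound the Wasserstein-2 distance by the cost of one explicit coupling, namely the natural coupling along the interpolation path, and then estimate that cost with Minkowski's inequality. Since $\Wtwo(p_t,p_1)$ is an infimum over all couplings $\pi\in\Pi(p_t,p_1)$, the joint law of $(x_t^{(\alpha,\beta)},x_1)$ induced by the construction $x_0=\alpha x_1+\beta\nu$ and $x_t=(1-t)x_0+tx_1$ is admissible. This gives
\begin{equation*}
\flowstate(p_t)\le\left(\mathbb{E}\|x_t-x_1\|^2\right)^{1/2}.
\end{equation*}

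The key algebraic step reuses the expansion from the proof sketch of Proposition~\ref{prop:Ambiguity}: $x_t=((1-t)\alpha+t)x_1+(1-t)\beta\nu$. Hence
\begin{equation*}
x_t-x_1=(1-t)\big[(\alpha-1)x_1+\beta\nu\big],
\end{equation*}
so that $(\mathbb{E}\|x_t-x_1\|^2)^{1/2}=|1-t|\,(\mathbb{E}\|(\alpha-1)x_1+\beta\nu\|^2)^{1/2}$. Minkowski's inequality in $L^2$ (the triangle inequality for $(\mathbb{E}\|\cdot\|^2)^{1/2}$) then gives
\begin{equation*}
\left(\mathbb{E}\|(\alpha-1)x_1+\beta\nu\|^2\right)^{1/2}\le|\alpha-1|\left(\mathbb{E}\|x_1\|^2\right)^{1/2}+|\beta|\left(\mathbb{E}\|\nu\|^2\right)^{1/2}.
\end{equation*}
The normalization $\mathbb{E}\|x_1\|^2=\mathbb{E}\|\nu\|^2=1$ turns this into the first bound. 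The bound uses no assumption on the dependence between $x_1$ and $\nu$, which is why the triangle inequality is the right tool here. If they were uncorrelated one could do better, $\sqrt{(\alpha-1)^2+\beta^2}$, but the stated bound does not need this.

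For the convex-combination case I substitute $\beta=1-\alpha$ with $\alpha\in[0,1]$. Then $|\beta|=|1-\alpha|=|\alpha-1|$, so the right-hand side becomes $|1-t|\cdot 2|\alpha-1|$ as claimed.

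There is no real obstacle; the only point needing care is justifying the \emph{without loss of generality} normalization. I would handle it by noting that for general second moments the same argument yields
\begin{equation*}
\flowstate(p_t)\le|1-t|\left(|\alpha-1|\,\|x_1\|_{L^2}+|\beta|\,\|\nu\|_{L^2}\right),
\end{equation*}
and that rescaling $x_1$ and $\nu$ can be absorbed into $\alpha$ and $\beta$. I would also remark that the coupling used is generally suboptimal, which is why only an upper bound is claimed.
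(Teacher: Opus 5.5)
Your proposal is correct and follows essentially the paper's route: bound $\Wtwo(p_t,p_1)$ by the cost of the natural coupling of $(x_t,x_1)$, factor $x_t-x_1=(1-t)[(\alpha-1)x_1+\beta\nu]$, and estimate $(\mathbb{E}\|(\alpha-1)x_1+\beta\nu\|^2)^{1/2}$, where your single appeal to Minkowski's inequality in $L^2$ packages what the paper does by applying the pointwise triangle inequality, expanding the square, and using Cauchy--Schwarz on $\mathbb{E}(\|x_1\|\,\|\nu\|)$. One small caution on your side remark (the paper's reparameterization is equally loose): the scale of $x_1$ cannot literally be absorbed into $\alpha$, because the ``$-1$'' in $\alpha-1$ comes from the target $x_1$ itself, so writing $x_1=c\,\tilde x_1$ turns it into $\tilde\alpha-c$ rather than $\tilde\alpha-1$; your general-moment bound $|1-t|(|\alpha-1|\,\|x_1\|_{L^2}+|\beta|\,\|\nu\|_{L^2})$ is nonetheless correct and follows from the normalized case by the $1$-homogeneity of $\Wtwo$ under a global rescaling, so your proof of the stated proposition is unaffected.
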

\begin{proof}[Proof Sketch] We bound $\Wtwo(p_t, p_1)$ via $\mathbb{E}\Vert{}x_t - x_1\Vert{}^2$. Substituting the initialization yields the displacement $\vert{}1-t\vert{}\Vert{}(\alpha-1)x_1 + \beta \nu\Vert{}$. Applying the triangle inequality and the Cauchy-Schwarz
inequality yields the quadratic bound (Full proof is in Appendix A1.3).
\end{proof}
The derived upper bound suggests that $\mathcal{S}(p_t)$ tends to decrease as $t \to 1$, which is consistent with convergence toward the data distribution. The parameter $\alpha$ modulates the trajectory; specifically, larger values of $\alpha$ reduce the bound by positioning $x_t$ closer to $x_1$.

\begin{figure*}[t]
\centering
\begin{tikzpicture}[
font=\small,
input/.style={draw=blue!60!black,fill=blue!5,rounded corners=2pt,thick,minimum width=9mm,minimum height=6mm,align=center},
feature/.style={draw=orange!80!black,fill=orange!12,rounded corners=2pt,thick,minimum width=9mm,minimum height=6mm,align=center},
model/.style={draw=gray!55!blue!80!black,fill=gray!15!blue!8,rounded corners=3pt,thick,minimum width=24mm,minimum height=8mm,align=center},
modelhi/.style={draw=teal!80!black,fill=teal!10,rounded corners=3pt,thick,minimum width=24mm,minimum height=8mm,align=center},
arrow/.style={-{Latex[length=2mm,width=1.5mm]},thick,draw=gray!85!black},
label/.style={font=\small\bfseries,text=black},
title/.style={font=\small\bfseries,text=black}
]

\begin{scope}[xshift=0.6cm]
\node[input] (a1) at (-1.65,0.8) {$x_t$};
\node[input] (a2) at (-0.55,0.8) {$t$};
\node[input] (a3) at (0.55,0.8) {$x_0$};
\node[input,opacity=0] (a4) at (1.65,0.8) {$z$};

\node[model] (am) at (0,-0.2) {$\mathbf{v_\theta}$};

\draw[arrow] (a1)--(am.145);
\draw[arrow] (a2)--(am.north);
\draw[arrow] (a3)--(am.35);

\node[label] at (0,-1.0) {$\mathbf{v_\theta(x_t,t,x_0)}$};
\node[title] at (0,-1.55) {(a) CFM (FlowSE)};
\end{scope}

\begin{scope}[xshift=4.6cm]
\node[input] (b1) at (-1.65,0.8) {$x_t$};
\node[input,opacity=0] (b2) at (-0.55,0.8) {$t$};
\node[input] (b3) at (0.55,0.8) {$x_0$};
\node[input,opacity=0] (b4) at (1.65,0.8) {$z$};

\node[model] (bm) at (0,-0.2) {$\mathbf{v_\theta}$};

\draw[arrow] (b1)--(bm.145);
\draw[arrow] (b3)--(bm.35);

\node[label] at (0,-1.0) {$\mathbf{v_\theta(x_t,x_0)}$};
\node[title] at (0,-1.55) {(b) ARF};
\end{scope}

\begin{scope}[xshift=9.2cm]
\node[input] (c1) at (-1.65,0.8) {$x_t$};
\node[input] (c2) at (-0.55,0.8) {$t$};
\node[input] (c3) at (0.55,0.8) {$x_0$};
\node[feature] (c4) at (1.65,0.8) {$z_t$};

\node[modelhi] (cm) at (0,-0.2) {$\mathbf{v_\theta}$};

\draw[arrow] (c1)--(cm.155);
\draw[arrow] (c2)--(cm.120);
\draw[arrow] (c3)--(cm.60);
\draw[arrow] (c4)--(cm.25);

\node[label] at (0,-1.0) {$\mathbf{v_\theta(x_t,t,x_0,z_t)}$};
\node[title] at (0,-1.55) {(c) CFM + DL};
\end{scope}

\begin{scope}[xshift=13.8cm]
\node[input] (d1) at (-1.65,0.8) {$x_{\zeta}$};
\node[feature] (d2) at (-0.55,0.8) {$z$};
\node[input] (d3) at (0.55,0.8) {$x_0$};
\node[input,opacity=0] (d4) at (1.65,0.8) {$\zeta$};

\node[modelhi] (dm) at (0,-0.2) {$\mathbf{v_\theta}$};

\draw[arrow] (d1)--(dm.145);
\draw[arrow] (d2)--(dm.north);
\draw[arrow] (d3)--(dm.35);

\node[label] at (0,-1.0) {$\mathbf{v_\theta(x_{\zeta},z,x_0)}$};
\node[title] at (0,-1.55) {(d) DFM};
\end{scope}

\end{tikzpicture}

\caption{Comparison of conditioning strategies for Flow-based restoration tasks. (a) \ac{CFM} conditions the velocity field on $t$. (b) Autonomous Rectified Flow (ARF) removes $t$. (c) CFM+DL conditions CFM with the additional discriminative latent (DL) representation $z_t=\mathcal{E}_{\phi}(x_t)$. (d) \ac{DFM} replaces $t$ with the Discriminative Flow-State Representation $z=\mathcal{E}_{\phi}(x_{\zeta}).$}

\label{fig:conditioning_comparison}
\end{figure*}

\subsection{Discriminative Flow-State Hypothesis}

Let $\mathcal{X}$ denote the data space, and let $\mathcal{E}_{\phi}:\mathcal{X}\rightarrow\mathcal{Z}$ be a pretrained, frozen discriminative encoder parameterized by $\phi$, mapping a sample $x\in \mathcal{X}$ to a latent representation $z=\mathcal{E}_{\phi}(x)$ within the representation space $\mathcal{Z}$. Furthermore, let  $\hat{x}=\mathcal{D}_\Phi(z) \sim p_d$, where $p_d$ represents the output distribution of the discriminative model $\mathcal{F}_{\Omega}= \mathcal{D}_\Phi \circ  \mathcal{E}_{\phi}$ parameterized by $\Omega$, and $\mathcal{D}_{\Phi}$ denotes the pretrained discriminative decoder parameterized by $\Phi$.

\begin{hypothesis}[Discriminative Flow-State Hypothesis]
\label{hyp:DFlow}
Latent representations $z\in \mathcal{Z}$ learned by discriminative restoration models encode information about the underlying Flow-State $\mathcal{S}(p_t)$ governing generative transport.
\end{hypothesis}
Consequently, rather than relying explicitly on time $t$, discriminative latent representations $z$ provide a sample-dependent description of restoration progress, proximity to the target distribution, and restoration complexity. 
Furthermore, evaluating the true Flow-State $\mathcal{S}(p_t) = \Wtwo(p_t, p_1)$ is infeasible in practice because the ground-truth clean data distribution $p_1$ is inaccessible at inference. Instead, we define the approximate Flow-State $\hat{\mathcal{S}}(p_t) := \Wtwo(p_t, p_d)$ as an empirical surrogate, as we can observe samples from the distribution $p_d$. To justify this choice, the following proposition shows that the approximation error is bounded by the discriminative model's mean squared reconstruction error loss.

\begin{proposition}[Approximation of Flow-State]
\label{prop:discriminative_boundedness}
Let $\hat{x}_1\sim p_d$ be the estimate of $x_1\sim p_1$ obtained from a discriminative restoration model trained under a mean squared error objective $\mathcal{L} = \mathbb{E}[\Vert{}\hat{x}_1 - x_1\Vert{}^2]$. 
Then, we have
\begin{equation}
    |\mathcal{S}(p_t) - \hat{\mathcal{S}}(p_t)|^2 \le \mathcal{L}.
\end{equation}
\end{proposition}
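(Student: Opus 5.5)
The plan is to use that $\Wtwo$ is a metric on distributions with finite second moment, and then to bound $\Wtwo(p_1,p_d)$ by the discriminative model's error. The triangle inequality for $\Wtwo$ gives both
\begin{equation}
\Wtwo(p_t,p_1) \le \Wtwo(p_t,p_d) + \Wtwo(p_d,p_1)
\end{equation}
and
\begin{equation}
\Wtwo(p_t,p_d) \le \Wtwo(p_t,p_1) + \Wtwo(p_1,p_d).
\end{equation}
Combining the two yields the reverse triangle inequality
\begin{equation}
|\mathcal{S}(p_t) - \hat{\mathcal{S}}(p_t)| = |\Wtwo(p_t,p_1) - \Wtwo(p_t,p_d)| \le \Wtwo(p_1,p_d).
\end{equation}
This step needs only that $p_t$, $p_1$ and $p_d$ have finite second moments, which I will state as a standing assumption. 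If a reference is wanted for the metric property of $\Wtwo$, the triangle inequality follows from the standard gluing lemma for couplings.

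The second step bounds $\Wtwo(p_1,p_d)$ by an explicit coupling. The discriminative model produces $\hat{x}_1 = \mathcal{F}_\Omega(x_0)$ from the same degraded observation $x_0$ that is paired with $x_1$. The joint law of $(x_1,\hat{x}_1)$ under the data distribution therefore has marginals $p_1$ and $p_d$, so it lies in $\Pi(p_1,p_d)$. Since $\Wtwo^2$ is an infimum over such couplings,
\begin{equation}
\Wtwo^2(p_1,p_d) \le \mathbb{E}\big[\|\hat{x}_1 - x_1\|^2\big] = \mathcal{L}.
\end{equation}
Squaring the inequality from the first step, with both sides nonnegative, then gives $|\mathcal{S}(p_t)-\hat{\mathcal{S}}(p_t)|^2 \le \Wtwo^2(p_1,p_d) \le \mathcal{L}$, which is the claim.

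The main point needing care is interpretive rather than technical: $\mathcal{L}$ must be the true expected squared error over the joint data distribution, not an empirical training loss. I will make this explicit. This ensures that the pair $(x_1,\hat{x}_1)$ is a genuine coupling of $p_1$ and $p_d$, with $p_d$ defined as the pushforward of the source distribution under $\mathcal{F}_\Omega$.

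With that convention fixed, the rest is the routine metric argument above. The bound holds uniformly in $t$, because $p_t$ enters only through the triangle inequality.
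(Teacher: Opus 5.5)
Your proof is correct and follows the paper's argument: the reverse triangle inequality for $\Wtwo$ reduces the claim to bounding $\Wtwo(p_1,p_d)$, and the explicit coupling of $x_1$ with the model's estimate then gives $\Wtwo^2(p_1,p_d)\le\mathcal{L}$. The only differences are cosmetic. The paper feeds the model $x_t$, so its $p_d$ and $\mathcal{L}$ may depend on $t$ and your ``uniformly in $t$'' remark relies on your choice of $x_0$ as input, though the coupling step is identical either way. Your explicit finite-second-moment assumption and population-loss convention are details the paper leaves implicit.
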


\begin{proposition}[Flow-State Preservation]
\label{prop:flow_state_preservation}
Assuming the discriminative model's architecture forms a Markov chain $x_t \to z \to \hat{x}_1$, where $z = \mathcal{E}_{\phi}(x_t)$ 
and $\hat{x}_1$ is the estimated clean signal. Then,
\begin{equation}
    \mathcal{I}(z;x_1)\geq \mathcal{I}(\hat{x}_1;x_1),
\end{equation}
where $\mathcal{I}$ denotes the mutual information.
\end{proposition}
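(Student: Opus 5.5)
The plan is to derive Proposition~\ref{prop:flow_state_preservation} as a direct consequence of the data processing inequality, applied to a Markov chain that also contains the clean target $x_1$. The stated chain $x_t \to z \to \hat{x}_1$ involves only the model's internal variables, so the first step is to embed $x_1$ in it. I would argue that
\begin{equation}
x_1 \to x_t \to z \to \hat{x}_1
\end{equation}
is a Markov chain. The justification is architectural. The encoder output $z=\mathcal{E}_{\phi}(x_t)$ is a deterministic function of $x_t$, and the frozen, pretrained parameters $\phi$ do not depend on the particular $x_1$ being restored. Likewise, $\hat{x}_1=\mathcal{D}_{\Phi}(z)$ is computed from $z$ alone. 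Hence $z$ is conditionally independent of $x_1$ given $x_t$, and $\hat{x}_1$ is conditionally independent of $(x_1,x_t)$ given $z$. Here $x_1$ enters only through the joint law of $(x_1,x_t)$ induced by the interpolation path $x_t=(1-t)x_0+tx_1$ with $x_0=\alpha x_1+\beta\nu$.

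The second step is to marginalize out $x_t$, which yields the shorter chain $x_1 \to z \to \hat{x}_1$. Concretely, $p(\hat{x}_1\mid z,x_1)=p(\hat{x}_1\mid z)$, because $\hat{x}_1$ depends on $(x_1,x_t,z)$ only through $z$.

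The third step applies the chain rule of mutual information in two ways:
\begin{equation}
\mathcal{I}(x_1; z,\hat{x}_1) = \mathcal{I}(x_1;z) + \mathcal{I}(x_1;\hat{x}_1\mid z) = \mathcal{I}(x_1;\hat{x}_1) + \mathcal{I}(x_1;z\mid \hat{x}_1).
\end{equation}
The Markov property gives $\mathcal{I}(x_1;\hat{x}_1\mid z)=0$, and conditional mutual information is nonnegative, so $\mathcal{I}(x_1;z\mid\hat{x}_1)\ge 0$. Combining these gives $\mathcal{I}(z;x_1)\ge\mathcal{I}(\hat{x}_1;x_1)$, as claimed. Equality holds exactly when $\hat{x}_1$ is a sufficient statistic of $z$ for $x_1$. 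This equality case supports the interpretation that the latent $z$ retains at least as much target-relevant (and hence, via Proposition~\ref{prop:discriminative_boundedness}, Flow-State-relevant) information as the restored output.

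The main obstacle I expect is technical rather than conceptual. The variables $x_t$, $z$ and $x_1$ are continuous, and $z$ is a deterministic function of $x_t$, so some mutual-information terms, such as $\mathcal{I}(x_t;z)$, may be infinite. The quantities actually compared, however, involve $x_1$, so I would define mutual information measure-theoretically as a supremum over finite partitions; the data processing inequality holds in that generality. If either side is infinite, the inequality is interpreted in $[0,\infty]$. I would also state explicitly that encoder and decoder are fixed, so no dependence on $x_1$ leaks through training into the chain.
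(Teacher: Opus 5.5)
Your proposal is correct and follows essentially the same route as the paper: both extend the chain to $x_1 \to x_t \to z \to \hat{x}_1$, using that the decoder sees only $z$ (no skip connections), and then invoke the data processing inequality. You additionally derive the inequality explicitly via the two chain-rule expansions and handle the continuous, possibly infinite, mutual-information case, which the paper leaves implicit.
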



The proofs are provided in Appendix~A1.4 and~A1.5. Propositions~\ref{prop:discriminative_boundedness} and~\ref{prop:flow_state_preservation} jointly justify our \ac{DFM} framework. Specifically, Proposition~\ref{prop:discriminative_boundedness} ensures the deviation between $\mathcal{S}(p_t)$ and its approximation $\hat{\mathcal{S}}(p_t)$ shrinks as discriminative model's training loss is minimized, while Proposition~\ref{prop:flow_state_preservation} shows that $z$ retains at least as much target information as $\hat{x}_1$. Hence, leveraging $z$ rather than $\hat{x}_1$ for characterizing transport progress is justified. The assumption of the discriminative model forming a Markov chain is admittedly a strong one, but would be fulfilled, e.g., for an autoencoder model without skip connections. 
However, Sections~\ref{res:Flow-State Preservation}--\ref{sec:adaptive_inference} empirically validate that discriminative latent representations can be utilized for accurately predicting different observable proxies of the Flow-State. In the remainder of this paper, we define these discriminative latent representations $z$ as the \ac{DFSR}.


\section{Discriminative Flow Matching}
\label{sec:method}

Following the additive signal model introduced in Proposition~\ref{prop:Ambiguity}, we describe the observed data as
\begin{equation}
    x_0 = \alpha x_1 + \beta \nu.
\label{eq:additive}
\end{equation}
To this end, motivated by the Discriminative Flow-State Hypothesis, we propose \ac{DFM}, which conditions the velocity
field on \ac{DFSR} $z=\mathcal{E}_{\phi}(x_{\zeta}),
$ derived from a  pretrained encoder $\mathcal{E}_{\phi}$, instead of time $t$.
Following~\cite{lee2025flowse, richter2023speech}, we define the conditional probability path $p(x_\zeta \mid x_0, x_1) = \mathcal{N}(x_\zeta; \mu_\zeta, \sigma_\zeta^2 \mathbf{I})$ mapping the Gaussian-perturbed observation $x_0 + \sigma\epsilon$ at $\zeta = 0$ to the clean target $x_1$ at $\zeta = 1$:
\begin{equation}
\mu_{\zeta} = (1-\zeta)x_0 + \zeta x_1,
\quad
x_{\zeta} = \mu_{\zeta} + \sigma_\zeta\epsilon,
\label{eq:path}
\end{equation}
where DFM  interpolation coordinate  $\zeta \sim \mathcal{U}[0,1]$, $\epsilon \sim \mathcal{N}(0, \mathbf{I})$, and
$\sigma_{\zeta}=(1-\zeta)\sigma$ for a constant $\sigma\geq0$ . With the corresponding target velocity $v^\star = \frac{\partial x_{\zeta}}{\partial \zeta} = (x_1 - x_0) - \sigma\epsilon$, the \ac{DFM} training objective is formulated as
\begin{equation}
\mathcal{L}_{\mathrm{DFM}}(\theta) = \mathbb{E} \left[ \left\| v_{\theta}(x_{\zeta}, z,x_0) - v^\star \right\|_2^2 \right].
\label{eq:dfm}
\end{equation}

\noindent \textbf{DFM Inference:} Inference is initialized from the degraded observation, $x^{(0)}=x_0+\sigma\epsilon$,
and performed using Euler integration
\begin{equation}
x^{(n+1)}
= x^{(n)} +\gamma_n v_\theta\left( x^{(n)},
\mathcal{E}_{\phi}(x^{(n)}), x_0 \right),
\label{eq:euler}
\end{equation}
where $\gamma_n$ denotes the integration step size at $n=0,\ldots,N-1$ and $N$ is the total \ac{NFE}. For simplicity, we use a uniform time discretization, i.e., $\gamma_n=1/N$ $\forall n$.
The estimate of the clean data is obtained as $\hat{x}_1=x^{(N)}$. Since the velocity network $v_\theta$ is not dependent on the interpolation factor $\zeta$, transport is guided entirely by the evolving \ac{DFSR} $z$, which can also be used to adapt either the computational budget $N$ or the integration step sizes $\gamma_n$ as described in Section~\ref{sec:adaptive_inference}.

\section{Experiments and Results}
\label{sec:experiments}

\subsection{Flow-State Preservation}
\label{res:Flow-State Preservation}

We first validate the Discriminative Flow-State Hypothesis by examining whether \ac{DFSR} preserve information about the underlying Flow-State. Figure~\ref{t-SNE dccrn} visualizes latent representations extracted from a pretrained DCCRN encoder \cite{hu2020dccrn} across SNRs ranging from $-30$ to $40$~dB. The latent space is organized according to degradation severity, with progressively cleaner samples placed closer to the clean latent manifold.

\begin{figure}[t]
\centering
\includegraphics[width=0.9\columnwidth]{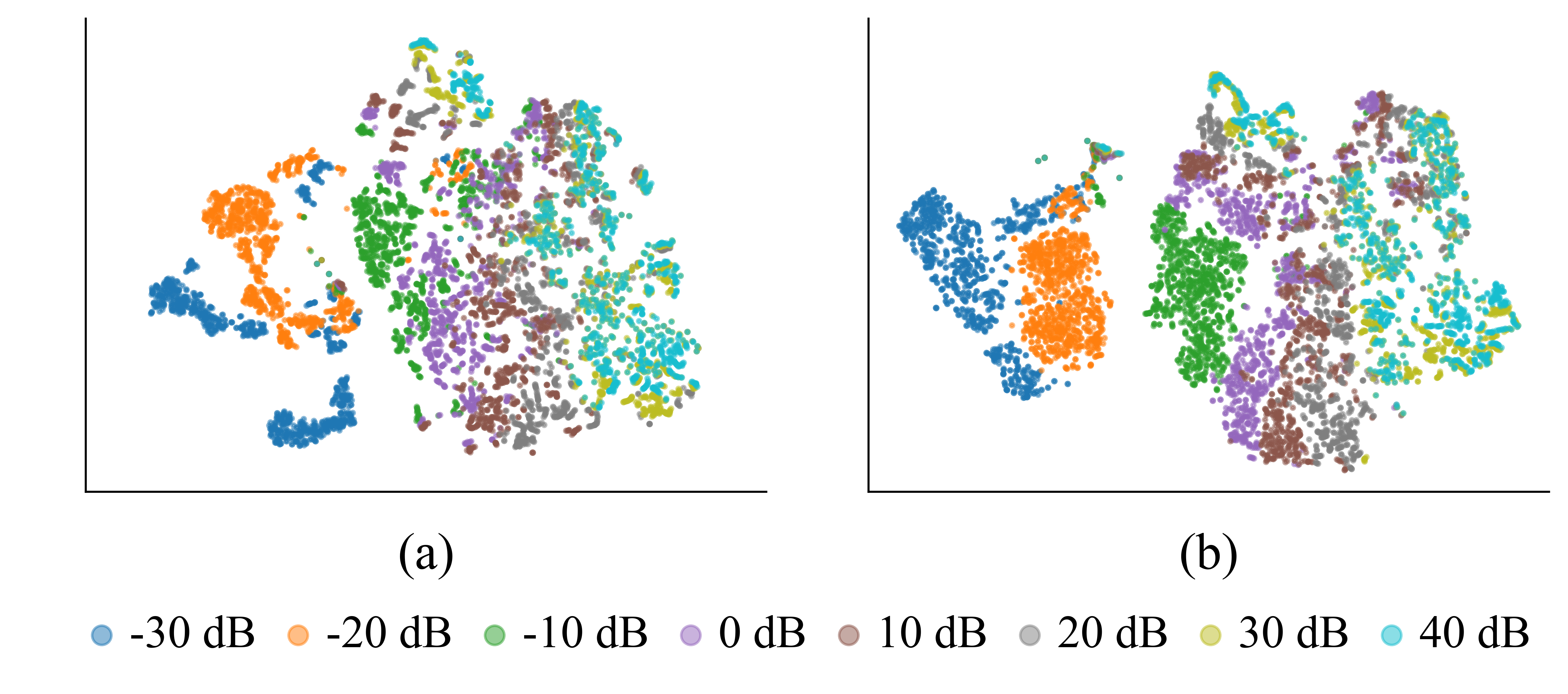}
\caption{t-SNE visualization of DCCRN latent representations under (a) rain and (b) machinery noise.}
\label{t-SNE dccrn}
\end{figure}

We further predict three quantities related to the Flow-State from the \ac{DFSR}: the interpolation factor used in the training phase $\zeta$, the ground truth intermediate SNR (iSNR) computed from the mean $\mu_{\zeta}$ (see \eqref{eq:path}), and global (input) SNR (gSNR) computed from the degraded observation $x_0$. Both \ac{SNR} measures are normalized to $[0,1]$ as

\begin{equation}
    s_{\mathrm{SNR}} \triangleq \max\left(0, \min\left(1, \frac{\mathrm{SNR} + 10}{70}\right)\right),
\end{equation}
mapping the range $[-10,60]$~dB to $[0,1]$. 

Figure~\ref{fig:state_prediction} compares our proposed latent-predictor against an anchor-based baseline. The latent-predictor utilizes the \ac{DFSR} $z_\mu = \mathcal{E}_{\phi}(\mu_{\zeta})$ as input. Conversely, the anchor-based baseline uses the signal pair $\{\mu_{\zeta}, \mathcal{F}_{\Omega}(\mu_{\zeta})\}$. Both architectures utilize the PESQNet backbone~\cite{xu2022pesqnet}, and are trained on the DNS Challenge dataset~\cite{reddy2020interspeech} and evaluated on $1,500$ simulated samples.


\begin{figure}[t]
\centering

\begin{tikzpicture}[
font=\small,
block/.style={draw,rounded corners=2pt,minimum width=1.45cm,minimum height=0.48cm,align=center},
pred/.style={draw,fill=gray!10,rounded corners=2pt,minimum width=1.45cm,minimum height=0.58cm,align=center},
arrow/.style={-{Latex[length=1.4mm]},thick}
]
\node[block] (z) at (0,1.0) {$z=\mathcal{E}_{\phi}(\mu_{\zeta})$};
\node at (0,0.35) {\textbf{or}};
\node[block] (anchor) at (0,-0.3) {$\{\mu_{\zeta},\mathcal{F}_{\Omega}(\mu_{\zeta})\}$};
\node[block] (input) at (2.3,0.35) {Input};

\draw[arrow] (z)--(input);
\draw[arrow] (anchor)--(input);

\node[pred] (p1) at (4.7,1.25) {Predictor};
\node[pred] (p2) at (4.7,0.35) {Predictor};
\node[pred] (p3) at (4.7,-0.55) {Predictor};

\draw[arrow] (input)--(p1);
\draw[arrow] (input)--(p2);
\draw[arrow] (input)--(p3);

\node[block] (o1) at (6.9,1.25) {$\widehat{\zeta}$};
\node[block] (o2) at (6.9,0.35) {$\widehat{\text{iSNR}}$};
\node[block] (o3) at (6.9,-0.55) {$\widehat{\text{gSNR}}$};

\draw[arrow] (p1)--(o1);
\draw[arrow] (p2)--(o2);
\draw[arrow] (p3)--(o3);
\end{tikzpicture}
\caption{Prediction of observable Flow-State quantities from either the \ac{DFSR} or the anchor-based signal pair.}
\label{fig:state_prediction}
\end{figure}

\begin{table}[t]
\centering
\begin{tabular}{llccc}
\toprule
Target & Method & MAE $\downarrow$ & RMSE $\downarrow$ & Pearson $\uparrow$\\
\midrule
iSNR
& Latent & \textbf{0.032} & \textbf{0.047} & 0.968\\
& Anchor & 0.035 & 0.048 & \textbf{0.971}\\
\midrule
gSNR
& Latent & \textbf{0.117} & \textbf{0.154} & \textbf{0.491}\\
& Anchor & 0.142 & 0.181 & 0.301\\
\midrule
$\zeta$
& Latent & \textbf{0.154} & \textbf{0.190} & 0.741\\
& Anchor & 0.155 & 0.211 & \textbf{0.751}\\
\bottomrule
\end{tabular}
\caption{Prediction performance for quantities related to the Flow-State. Best results are shown in \textbf{bold}.}
\label{tab:target_predictions}
\end{table}

Table~\ref{tab:target_predictions} shows that iSNR predictors can estimate iSNR accurately irrespective of the used input features, achieving Pearson correlations above $0.96$.  Notably, the latent-based predictor achieves the lowest MAE and RMSE across all three targets, while Pearson correlations are comparable between the two input types. As noted in Proposition~2, estimating the interpolation factor $\zeta$ and gSNR is inherently challenging, as varying parameterizations of $\alpha$ and $\beta$  yield identical states for $\mu_{\zeta}$. Consequently, as confirmed by the empirical results, even the anchor-based predictor—which also incorporates $\mu_{\zeta}$ as input feature, struggles to accurately estimate $\zeta$ or gSNR. Nevertheless, the latent-based predictor's strong performance confirms that \ac{DFSR} effectively preserve Flow-State information, empirically validating Proposition~5 and the Discriminative Flow-State Hypothesis.

\begin{figure}[t]
\centering
\includegraphics[width=0.85\columnwidth]{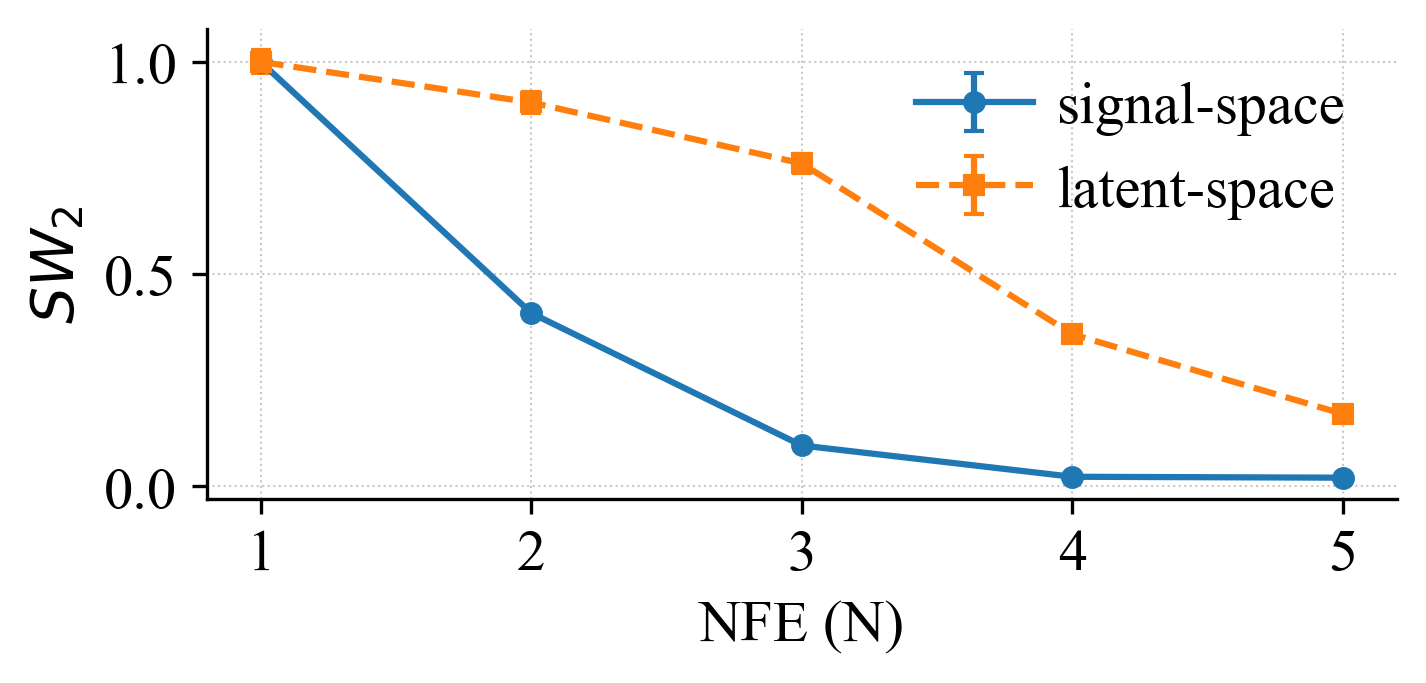}
\caption{Evolution of signal-space and latent-space Flow-States over NFE.}
\label{fig:latent_geometry}
\end{figure}

\subsection{Flow-State Monotonicity}
\label{sec:monotonicity}

We further analyze  whether successive \ac{DFM} updates induce a monotonic decrease of the Flow-State of the distribution $p_n$ of the intermediate estimates $x^{(n)}$ (see \eqref{eq:euler}) of \ac{DFM} toward the clean-speech distribution $\pdata$. The Flow-State in signal-space at the $n$-th inference step can be defined as
\begin{equation}
\flowstate(p_n)=\Wtwo(p_n,\pdata),
\label{eq:SW distance}
\end{equation}
analogously to Definition~1 in Section~\ref{sec:theory}. Since exact computation of $\Wtwo$ is computationally prohibitive in high-dimensional waveform spaces, we estimate it using the Sliced Wasserstein-2 distance ($\mathcal{SW}_2$)~\cite{rabin2011wasserstein,kolouri2019generalized}. Empirical signal-space distributions are constructed from overlapping $25$-ms waveform segments extracted from the DNS Challenge non-reverberant test set~\cite{reddy2020interspeech}, consisting of $150$ samples with diverse noise conditions. Similar to \eqref{eq:SW distance}, we compute $\mathcal{SW}_2$ in the latent-space, where we obtain the corresponding discriminative latent representation of the samples of $p_n$ and $\pdata$ from the frozen DCCRN encoder~\cite{hu2020dccrn}.

Distances are estimated using $512$ random projections and averaged over $20$ trials, with features standardized via clean-data statistics. As shown in Fig.~\ref{fig:latent_geometry}, both signal- and latent-space Flow-States decrease monotonically during inference. The signal-space trajectory achieves Spearman's $\rho = -1.000$, Pearson's $r = -0.888$, and a convergence floor of $0.112\pm0.005$. The latent-space trajectory exhibits identical monotonic ordering ($\rho = -1.000$), a stronger Pearson correlation ($r = -0.973$), and zero monotonicity violations. The signal- and latent-space trajectories are rank-equivalent ($\rho = 1.000$), indicating that the \ac{DFSR} preserves the underlying Flow-State ordering. These results indicate that successive \ac{DFM} updates monotonically reduce the Flow-State, mirroring the behavior established analytically for OT-CFM in Proposition~1, and that the latent-space trajectory follows the same ordering, which supports the Discriminative Flow-State Hypothesis.


\subsection{Speech Enhancement}
\label{sec:seexp}

\subsubsection{Dataset}
We evaluate the proposed method on the Interspeech 2020 DNS Challenge dataset~\cite{reddy2020interspeech}, a widely adopted benchmark for supervised \ac{SE}. The training set contains approximately $1{,}000$ hours of noisy-clean speech pairs generated by mixing clean speech with noise at SNRs between $-10$ and $30$~dB, where $50\%$ of the utterances are additionally reverberated using the DNS Challenge RIR corpus~\cite{reddy2020interspeech}. Following the official evaluation protocol, performance is reported on both the synthetic non-reverberant test set ($150$ utterances), which enables reference-based evaluation, and the DNS Real Recordings test set ($300$ recordings), which assesses generalization to real-world acoustic conditions.

\subsubsection{Implementation Details}

Unless stated otherwise, all \ac{CFM} methods are trained using the FlowSE training framework~\cite{lee2025flowse} and the NCSN++ backbone~\cite{song2021score}, which is widely adopted in the literature for \ac{CFM} and diffusion-based \ac{SE}. \ac{DFM} employs a frozen DCCRN~\cite{hu2020dccrn} to extract \ac{DFSR} after the recurrent layers, projecting them into a $256$-dimensional embedding that is linearly interpolated to match the backbone's temporal resolution. This representation replaces the standard time embedding via additive conditioning~\cite{lee2025flowse,richter2023speech}. All models are trained using the Adam optimizer with a learning rate of $10^{-4}$, a batch size of $8$, and an exponential moving average decay of $0.999$. Following~\cite{lee2025flowse}, we use an FFT length of $510$, a window length of $510$, a hop length of $128$, and a Gaussian noise prefactor $\sigma = 0.487$, which was shown to be effective for the \ac{SE} tasks. Models are trained on a single A100 GPU for $800{,}000$ steps, with the best checkpoints selected via validation PESQ and SI-SDR, and results reported as the average over $5$ independent runs.

\begin{table}[t]
\centering
\small
\setlength{\tabcolsep}{1.8pt} 
\begin{tabular}{llcccc}
\toprule
\textbf{Cat.} & \textbf{Method} & \begin{tabular}[c]{@{}c@{}}\textbf{SI-SDR}\\($\uparrow$)\end{tabular} & \begin{tabular}[c]{@{}c@{}}\textbf{PESQ}\\($\uparrow$)\end{tabular} & \begin{tabular}[c]{@{}c@{}}\textbf{SCOREQ}\\ \textbf{Ref.} ($\downarrow$)\end{tabular} & \begin{tabular}[c]{@{}c@{}}\textbf{SCOREQ}\\ \textbf{Non-Ref.}($\uparrow$)\end{tabular} \\
\midrule
Unproc. & Noisy & 9.06 & 1.58 & 0.93 & 2.78 \\
& Clean & -- & -- & 0.00 & 4.60\\
\midrule
Disc. & DCCRN & 17.36 & 2.91 & 0.31 & 4.15\\
& GCRN & 16.71 & 2.63 & 0.42 & 3.87 \\
\midrule
Diff. & SGMSE+ & 16.86 & 2.81 & 0.29 & 4.15\\
& BBED  & 19.10 & 2.81 & 0.26 & 4.38 \\
& GALDSE & 18.04 & 2.77 & 0.30 & 4.23 \\
& SEBridge & 17.21 & 2.45 & 0.38 & 4.08 \\
& SToRM & 17.56 & 2.80 & 0.27 & 4.33 \\
\midrule
CFM & FlowSE & 18.99 & 2.86 & 0.25 & 4.44\\
& ARF & 19.04 & 2.82& 0.25 &4.43\\
& CFM+DL & 19.47 & 2.87& 0.25& 4.43 \\
& \textbf{DFM (Ours)} & \textbf{19.63} & \textbf{2.96} & \textbf{0.23} & \textbf{4.51} \\
\bottomrule
\end{tabular}
\caption{\ac{SE} results on the DNS Challenge synthetic non-reverberant test set.}
\label{tab:DNSnoreverb1}
\end{table}

\begin{table}[t]
\centering
\small
\setlength{\tabcolsep}{6.5pt} 
\begin{tabular}{lcccc}
\toprule
\textbf{Method} & \begin{tabular}[c]{@{}c@{}}\textbf{P808 MOS}\\($\uparrow$)\end{tabular} & \begin{tabular}[c]{@{}c@{}}\textbf{SIG}\\($\uparrow$)\end{tabular} & \begin{tabular}[c]{@{}c@{}}\textbf{BAK}\\($\uparrow$)\end{tabular} & \begin{tabular}[c]{@{}c@{}}\textbf{OVRL}\\($\uparrow$)\end{tabular} \\
\midrule
Noisy & 3.05 & 3.05 & 2.51 & 2.26 \\
RVAE  & 3.29 & 3.16 & 2.91 & 2.44 \\
CDiffuse  & 3.14 & 3.15 & 3.19 & 2.55 \\
MetricGAN+ & 3.26 & 2.88 & 3.39 & 2.45 \\
DisCoGAN & 3.65 & 3.32 & 3.91 & 2.98 \\
SGMSE  & 3.38 & 3.22 & 3.02 & 2.52 \\
SGMSE+  & 3.64 & 3.42 & 3.82 & 3.04 \\
\textbf{DFM (Ours)} & \textbf{3.75} & \textbf{3.43} & \textbf{4.03} & \textbf{3.14} \\
\bottomrule
\end{tabular}
\caption{Generalization results on the DNS Real Recordings test set~\cite{reddy2020interspeech}. Baseline results are taken from their respective original publications.}
\label{tab:DNSReal}
\end{table}

\begin{table}[ht]
\centering
\small
\setlength{\tabcolsep}{4.5pt} 
\begin{tabular}{lcccc}
\toprule
\textbf{Method} & \begin{tabular}[c]{@{}c@{}}\textbf{SI-SDR}\\($\uparrow$)\end{tabular} & \begin{tabular}[c]{@{}c@{}}\textbf{PESQ}\\($\uparrow$)\end{tabular} & \begin{tabular}[c]{@{}c@{}}\textbf{SCOREQ}\\ \textbf{Ref.} ($\downarrow$)\end{tabular} & \begin{tabular}[c]{@{}c@{}}\textbf{SCOREQ}\\ \textbf{Non-Ref.}($\uparrow$)\end{tabular} \\
\midrule
Noisy & 9.06 & 1.58 & 0.93 & 2.78 \\
Clean & -- & -- & 0 & 4.60\\
\midrule
CFM  & 18.74 & 2.78 & 0.27 & 4.32\\
ARF & 18.83 & 2.79& 0.26&4.38\\
CFM+DL & 18.96& 2.75& 0.26& 4.39 \\
\textbf{DFM (Ours)} & \textbf{19.17} & \textbf{2.88} & \textbf{0.24} & \textbf{4.44} \\
\bottomrule
\end{tabular}
\caption{Results on the DNS Challenge non-reverberant test set under the data-prediction training objective. }
\label{tab:DNSnoreverb2}
\end{table}

\subsubsection{Experimental Results}

To isolate the effect of conditioning on \ac{DFSR} $z$ compared to time conditioning, we evaluate \ac{DFM} against three  re-trained baseline \ac{CFM} methods, CFM (FlowSE) \cite{lee2025flowse}, \ac{CFM}+DL, and ARF~\cite{zhang2026arf} as shown in Fig.~\ref{fig:conditioning_comparison},  using identical backbones, trajectories, and training settings. We further benchmark against state-of-the-art discriminative, GAN-, and diffusion-based models: DCCRN~\cite{hu2020dccrn}, GCRN~\cite{tan2019learning}, SGMSE~\cite{welker2022speech}, SGMSE+~\cite{richter2023speech}, BBED~\cite{lay2023reducing}, GALDSE~\cite{wang2024gald}, SEBridge~\cite{qiu2023se}, SToRM~\cite{lemercier2023storm}, RVAE~\cite{leglaive2020recurrent}, CDiffuse~\cite{lu2022conditional}, MetricGAN+~\cite{fu2021metricgan+}, and DisCoGAN~\cite{shetu2025gan}. Performance is evaluated using standard \ac{SE} metrics: SI-SDR~\cite{le2019sdr} for signal reconstruction, wide-band PESQ~\cite{rix2001perceptual} for perceptual quality, SCOREQ~\cite{ragano2024scoreq} for DNN-based reference and non-reference based assessment, and DNSMOS~\cite{reddy2021dnsmos} for non-intrusive evaluation on real recordings.

Table~\ref{tab:DNSnoreverb1} shows results on the DNS synthetic non-reverberant test set. \ac{DFM} outperforms all \ac{CFM} variants, achieving $19.63$ dB SI-SDR and $2.96$ PESQ, gaining $0.64$ dB in SI-SDR and $0.10$ in PESQ over the baseline  \ac{CFM}. Relative to the discriminative baseline DCCRN, \ac{DFM} improves SI-SDR by more than $2.2$~dB and reduces the reference SCOREQ distance from $0.31$ to $0.23$. These results indicate that replacing  $t$  with \ac{DFSR} $z$ leads to more effective generative transport. Table~\ref{tab:DNSReal} evaluates generalization on the DNS Real Recordings test set, where \ac{DFM} outperforms the baseline methods, obtaining a P.808 MOS of $3.75$, together with best SIG ($3.43$), BAK ($4.03$), and OVRL ($3.14$) scores, demonstrating strong generalization to real-world recordings. We further evaluate \ac{DFM} using alternative discriminative backbones, including GCRN, TaylorSENet, and ULCNet~\cite{tan2019learning, ijcai2022p582, shetu2023ultra}. As shown in Appendix~A1.8, \ac{DFM} generalizes robustly across these architectures, consistently outperforming the baselines. Finally, Table~\ref{tab:DNSnoreverb2} reports results under the data-prediction objective ($\mathcal{L}_{\mathrm{DFM}}(\theta) = \mathbb{E}\left[\Vert{}v_\theta(x_{\zeta}, z, x_0) - x_1\Vert{}_2^2\right]$). \ac{DFM} consistently outperforms CFM, CFM+DL, and ARF across all evaluation metrics. The consistent improvements under both training objectives, together with the robust generalization to synthetic and real-world test sets, demonstrate the effectiveness of the proposed \ac{DFM} framework.

\begin{figure}[t]
\centering
\includegraphics[width=\columnwidth]{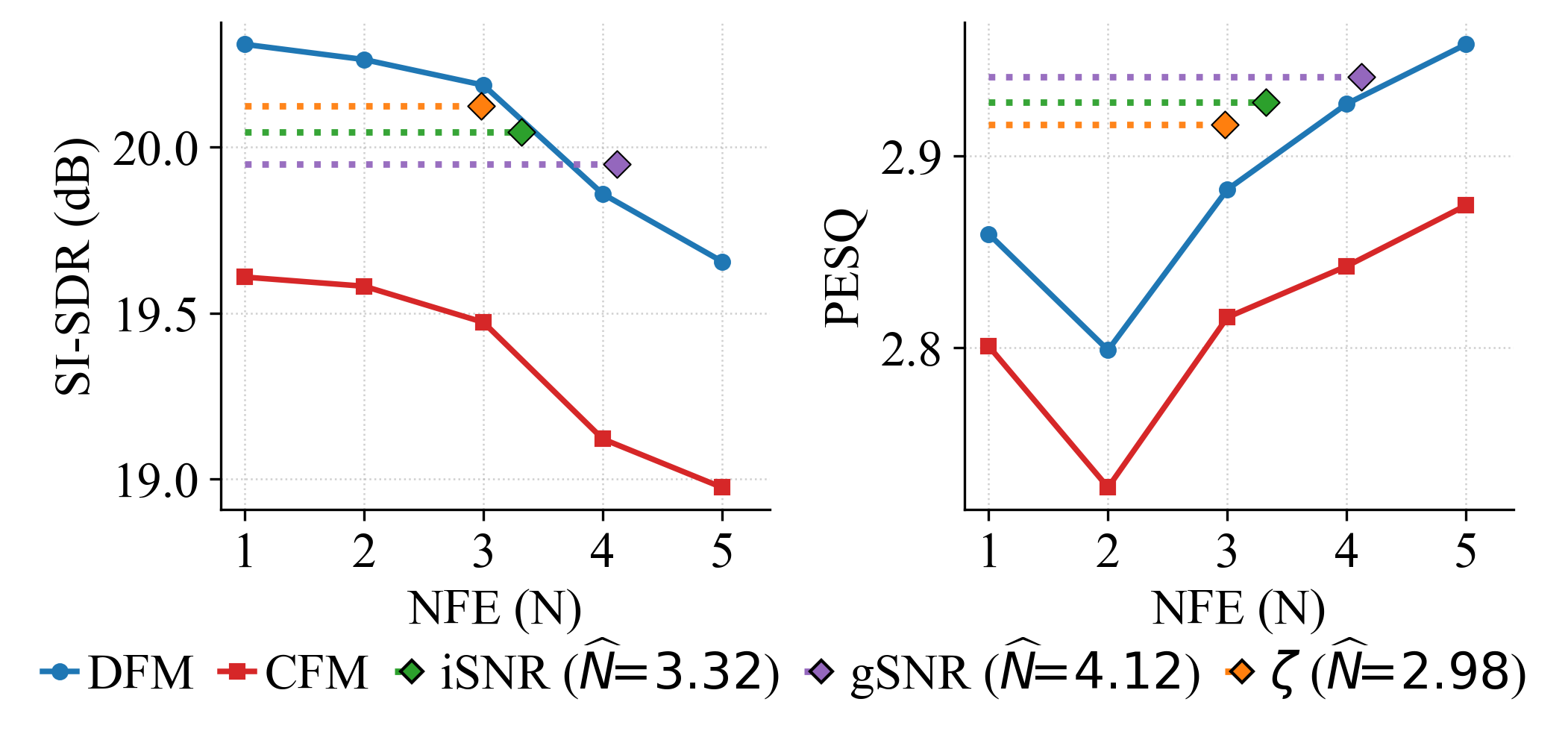} %
\caption{Performance for DFM with adaptive inference.}
\label{fig:Ada Performance}
\end{figure}

\subsubsection{Adaptive Inference}
\label{sec:adaptive_inference}

We hypothesize that the Flow-State characterizes not only restoration progress but also the sample-specific computational effort required for inference. Samples closer to the target distribution should require fewer \acp{NFE}, whereas challenging samples benefit from additional refinement. We define restoration complexity as the minimum number of function evaluations $N^\star(x_0)$ required to keep the reconstruction error $\mathcal{L}_{rec}$ (e.g., PESQ) below a threshold $\xi > 0$
\begin{equation}
    N^\star(x) = \min \left\{ N \in \mathbb{N} : \mathcal{L}_{rec}(\hat{x}_1^{(N)}, x_1) \leq \xi \right\}.   
    \label{eq:optimalN}
\end{equation}
Motivated by Section~\ref{res:Flow-State Preservation}, we introduce a predictor $\mathcal{P}_{\omega}$ that leverages Flow-State estimates (gSNR, iSNR, and $\zeta$) as a sample difficulty estimate to dynamically determine the computational budget and integration schedule. Given the initial representation $z^{(0)} = \mathcal{E}_{\phi}(x_0)$, the predictor jointly outputs the required \ac{NFE} $\widehat{N} \le N_{\max}$ and step sizes $\boldsymbol{\gamma} = [\gamma_1, \ldots, \gamma_{\widehat{N}}]$
\begin{equation}
(\widehat{N}, \boldsymbol{\gamma}) = \mathcal{P}_{\omega}(z^{(0)}),
\end{equation}
where $N_{\max}$ is the maximum allowed \ac{NFE} budget. This yields a sequence of non-uniform, sample-dependent step sizes assigning larger updates early in transport and smaller steps for later refinement (see Appendix~A1.12 for implementation details). Updates are performed according to \eqref{eq:euler}.
%

Fig.~\ref{fig:Ada Performance} evaluates adaptive inference performance by comparing standard fixed-budget \ac{CFM} and \ac{DFM} baselines ($N_{\max} = 5$) against adaptive \ac{DFM} variants  with gSNR, iSNR, and  $\zeta$ predictors. All adaptive strategies require significantly fewer \ac{NFE} than the maximum budget, achieving average \ac{NFE} of $4.12$, $3.32$, and $2.98$, respectively.  Despite this reduced computation, all adaptive variants outperform the \ac{CFM} baselines and surpass fixed-budget  \ac{DFM} in SI-SDR at $N=5$ while retaining comparable PESQ. We observe a nuanced trade-off: while SI-SDR is maximized with fewer \ac{NFE}, PESQ benefits from additional refinement. Notably, the $\zeta$-based predictor achieves $20.12$ dB SI-SDR and $2.92$ PESQ at $2.98$ NFE, improving SI-SDR over fixed-budget DFM by 0.49 dB at a 40.4\% reduction in computation relative to $N_{\max} = 5$.  These results confirm that \ac{DFM}, by conditioning on $z$, effectively aligns computational effort with sample-specific complexity to ensure a favorable quality-computation trade-off.
\begin{figure}[t]
\centering
\includegraphics[width=0.85\columnwidth]{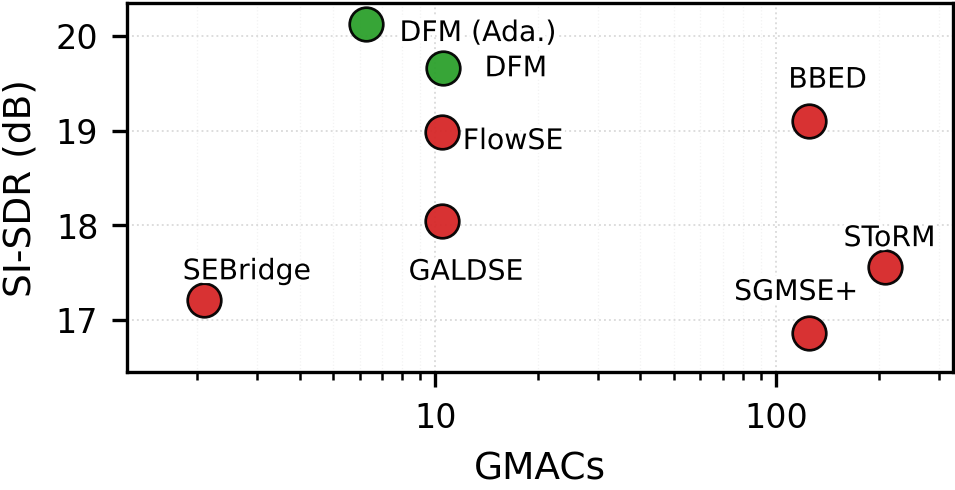} %
\caption{Complexity-performance trade-off (baselines shown in red, proposed methods in green)}
\label{fig:ComplexityPerformance}
\end{figure}

\subsubsection{Complexity--Performance Trade-off}

\ac{DFM} introduces additional computation via the discriminative encoder $\mathcal{E}_{\phi}$ (the DCCRN encoder, which contains approximately $2.6$M parameters). To offset this overhead, we reduce the conditioning dimension from the original $512$-dimensional time embedding used in FlowSE \cite{lee2025flowse} to a $256$-dimensional  representation. Consequently, the backbone is reduced to $63.2$M parameters, yielding a total model size of $65.8$M parameters, comparable to the FlowSE baseline ($65.6$M).

Fig.~\ref{fig:ComplexityPerformance} illustrates the complexity--performance trade-off of \ac{DFM} against recent diffusion- and Flow-based \ac{SE} methods in terms of SI-SDR and GMACs per frame. \ac{DFM} achieves superior enhancement performance while maintaining competitive computational complexity. Furthermore, adaptive inference reduces the average computational cost to $6.27$ GMACs per frame, outperforming both fixed-budget \ac{DFM} and existing diffusion- and Flow-based approaches.

\begin{figure}[t]
\centering
\includegraphics[width=\columnwidth]{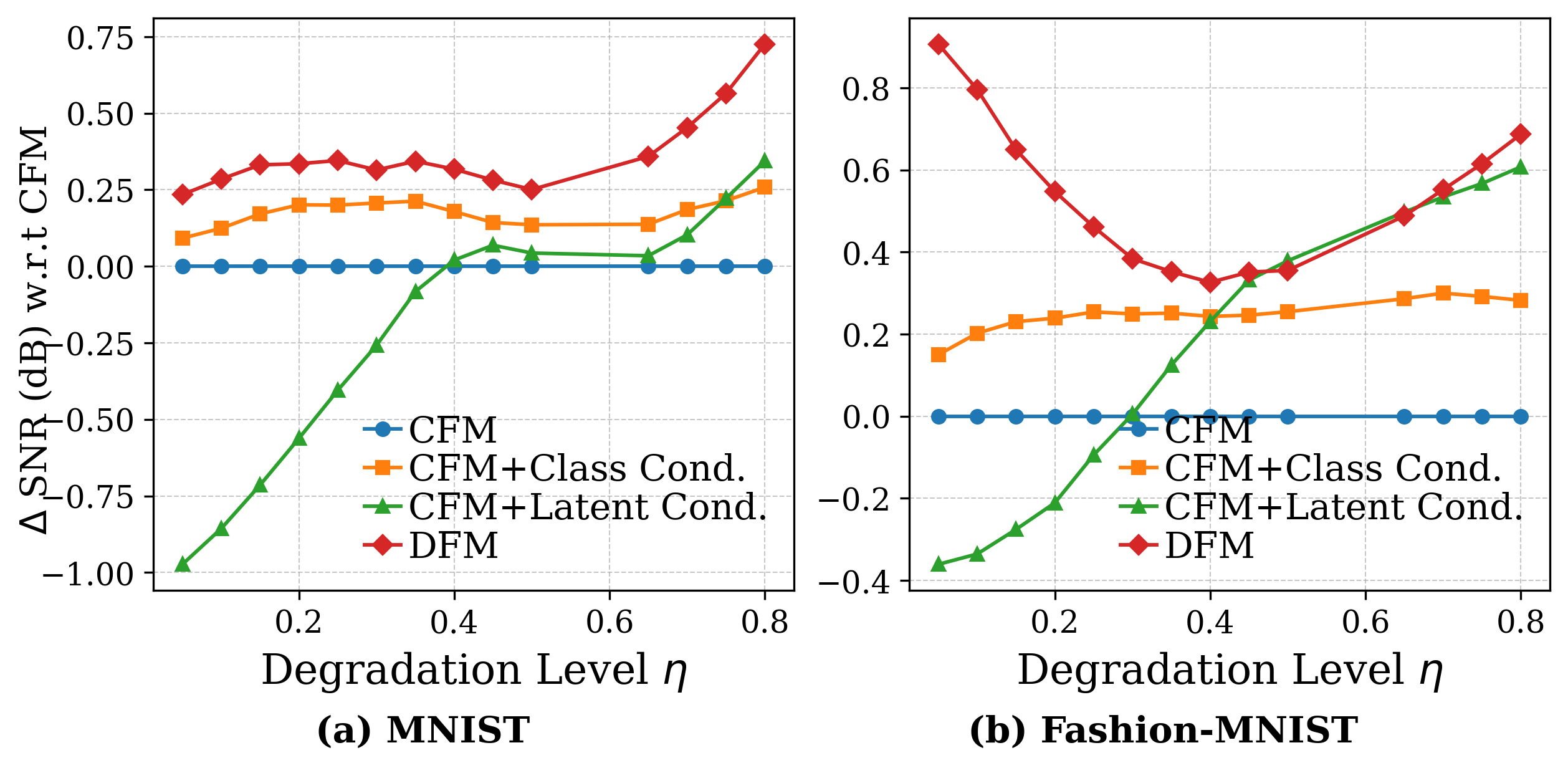}
\caption{Image denoising performance relative to \ac{CFM}.}
\label{imageperformance}
\end{figure}

\subsection{Image Denoising}
\label{Image Denoising}

To demonstrate that \ac{DFM} is not restricted to \ac{SE}, we conduct an image denoising experiment on the MNIST and Fashion-MNIST datasets using a lightweight image classifier as the discriminative encoder. We compare \ac{CFM}, \ac{CFM}+Class Cond. (where CFM additionally utilizes the ground truth label as conditioning information),  \ac{CFM}+Latent Cond. (which utilizes \ac{DFSR} as additional conditioning), and \ac{DFM}. Fig.~\ref{imageperformance} shows that \ac{DFM} consistently outperforms the baselines on both MNIST and Fashion-MNIST across all degradation levels. This highlights that \ac{DFSR} provide superior conditioning compared to semantic labels or latent features combined with explicit time conditioning. Furthermore, \ac{DFM} also supports adaptive inference for this task, allocating computational budget relative to the degradation level while preserving restoration quality. Network architectures and implementation details are provided in the Appendix~A1.14.

\section{Conclusion}

In this work, we introduce \ac{DFM}, a novel \ac{CFM} framework that replaces explicit time conditioning with discriminative Flow-State representations. Grounded in the Discriminative Flow-State Hypothesis, \ac{DFM} addresses the limitations of standard \ac{CFM} for restoration tasks by leveraging discriminative representations to dynamically adapt the generative process. Experiments across \ac{SE} and image denoising show that \ac{DFM} consistently improves restoration quality while enabling adaptive inference at a reduced computational cost, establishing discriminative conditioning as a powerful alternative to explicit time conditioning for \ac{CFM}.

\bibliography{aaai2027}

\clearpage
\appendix
\setcounter{secnumdepth}{2}
\renewcommand{\thesection}{A\arabic{section}}
\renewcommand{\thetable}{A\arabic{table}}
\renewcommand{\thefigure}{A\arabic{figure}}
\onecolumn 

\section{Appendix}

\subsection{Proof of Proposition 1 (Properties of Flow-State for OT-CFM)}
\label{app:flowgeomtry}
 \setcounter{proposition}{0}
\begin{proposition}[Properties of Flow-State under \ac{OT-CFM}]
For a transport trajectory governed by $\pi_{\text{OT}}$, the Flow-State $\flowstate(p_t)$ satisfies
\begin{equation}
    \flowstate(p_t) = (1-t) \Wtwo(p_0, p_1).
    \label{eq:app_theorem1}
\end{equation}
Consequently, $\flowstate(p_t)$ adheres to the following properties for all $t \in [0, 1]$:
\begin{enumerate}
    \item \textbf{Boundedness:} $0 \le \flowstate(p_t) \le \Wtwo(p_0, p_1)$.
    \item \textbf{Monotonicity:} $\frac{\text{d}}{\text{d}t} \flowstate(p_t) = - \Wtwo(p_0, p_1) < 0$, $\forall t < 1$.
\end{enumerate}
\end{proposition}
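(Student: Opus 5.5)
The plan is to prove the identity \eqref{eq:app_theorem1} by a matching upper and lower bound on $\Wtwo(p_t,p_1)$. Boundedness and Monotonicity will then follow by inspection. Throughout I take $p_0,p_1$ to have finite second moments, so that $\pi_{\text{OT}}$ exists and attains $\Wtwo^2(p_0,p_1)=\mathbb{E}_{\pi_{\text{OT}}}\|x_0-x_1\|^2$. I also fix $p_t$ to be the law of $x_t=(1-t)x_0+tx_1$ with $(x_0,x_1)\sim\pi_{\text{OT}}$, i.e., the McCann displacement interpolant.

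\emph{Upper bound.} The pair $(x_t,x_1)$, built from $(x_0,x_1)\sim\pi_{\text{OT}}$, is a coupling in $\Pi(p_t,p_1)$. Since $x_t-x_1=(1-t)(x_0-x_1)$, we get
\begin{equation*}
\Wtwo^2(p_t,p_1)\le \mathbb{E}\|x_t-x_1\|^2=(1-t)^2\,\mathbb{E}_{\pi_{\text{OT}}}\|x_0-x_1\|^2=(1-t)^2\Wtwo^2(p_0,p_1).
\end{equation*}
The same coupling $(x_0,x_t)$ gives $\Wtwo(p_0,p_t)\le t\,\Wtwo(p_0,p_1)$.

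\emph{Lower bound.} This is the main step, and I expect it to be the only real obstacle. Suppose, for contradiction, that $\Wtwo(p_t,p_1)<(1-t)\Wtwo(p_0,p_1)$. The cleanest route is the triangle inequality for $\Wtwo$, which is a metric on measures with finite second moment (it follows from the gluing lemma). Combining it with the upper bound on $\Wtwo(p_0,p_t)$ gives
\begin{equation*}
\Wtwo(p_0,p_1)\le \Wtwo(p_0,p_t)+\Wtwo(p_t,p_1)< t\Wtwo(p_0,p_1)+(1-t)\Wtwo(p_0,p_1)=\Wtwo(p_0,p_1),
\end{equation*}
which is a contradiction. This phrasing matches the paper's sketch: a cheaper coupling of $(p_t,p_1)$, glued to the interpolation coupling of $(p_0,p_t)$, would produce a coupling of $(p_0,p_1)$ with cost below that of $\pi_{\text{OT}}$, contradicting its optimality.

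I will explicitly allow the degenerate case $\Wtwo(p_0,p_1)=0$, where both sides vanish. At $t=1$ the identity is trivial.

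\emph{Consequences.} With $\flowstate(p_t)=(1-t)\Wtwo(p_0,p_1)$ and $t\in[0,1]$, the factor $1-t\in[0,1]$ gives Boundedness immediately. Differentiating the affine function of $t$ gives $\frac{d}{dt}\flowstate(p_t)=-\Wtwo(p_0,p_1)$. This derivative is strictly negative exactly when $p_0\neq p_1$, which holds under independent Gaussian initialization versus a data distribution. I will state that as the needed nondegeneracy assumption for the strict inequality.
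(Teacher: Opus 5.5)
Your proof is correct, and it departs from the paper's at the lower bound. The upper bound is the same in both: the coupling of $(p_t,p_1)$ induced by $\pi_{\text{OT}}$ costs $(1-t)^2\Wtwo^2(p_0,p_1)$.

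For the reverse inequality, the paper's appendix does not glue. It takes a hypothetical cheaper $\pi_t'\in\Pi(p_t,p_1)$ and pushes it through the linear map $(x_t,x_1)\mapsto\bigl((x_t-tx_1)/(1-t),\,x_1\bigr)$. It then asserts that the result is a coupling of $(p_0,p_1)$ with cost below $\Wtwo^2(p_0,p_1)$. The cost identity is right. However, the first marginal of that pushforward is the law of $(x_t-tx_1)/(1-t)$ under $\pi_t'$. That law depends on the joint distribution, and it equals $p_0$ only for special couplings such as the one induced by $\pi_{\text{OT}}$ itself.

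For instance, take $p_0=\delta_0$ and $p_1$ uniform on $\{\pm1\}$ in one dimension. Then $p_t$ is uniform on $\{\pm t\}$, and the map sends the anti-monotone coupling of $p_t$ with $p_1$ to a first marginal uniform on $\{\pm 2t/(1-t)\}$, not $\delta_0$. So the paper's change-of-variables step is unjustified as written.

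Your argument is the standard rigorous one. It uses the triangle inequality for $\Wtwo$ (the gluing lemma, under finite second moments) together with $\Wtwo(p_0,p_t)\le t\,\Wtwo(p_0,p_1)$. Applied over $0\le s\le t\le 1$, it also gives the stronger geodesic identity $\Wtwo(p_s,p_t)=|t-s|\,\Wtwo(p_0,p_1)$ essentially for free. The paper's route would have bought elementarity by avoiding gluing, but that is exactly where it breaks.

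Your observation that strict negativity in Property 2 requires $p_0\neq p_1$ also supplies an assumption the paper leaves implicit.
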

\begin{proof}
We begin with proving \eqref{eq:app_theorem1}. First note that for $t=1$, \eqref{eq:app_theorem1} trivially holds. Hence, let assume in the following that $t\neq 1$. We utilize the linear interpolation path 
\begin{equation}
    x_t - x_1 = (1-t)x_0 + t x_1 - x_1 = (1-t)(x_0 - x_1).
\end{equation}
Squaring its Euclidean norm yields
\begin{equation}
    \|x_t - x_1\|^2 = (1-t)^2 \|x_0 - x_1\|^2.
\end{equation}
First, preserving the optimal coupling $\pi_{\text{OT}}$ for $t=0$, we drop the infimum in the definition of $\Wtwo^2(p_t, p_1)$ to get the upper bound
\begin{align}
    \Wtwo^2(p_t, p_1) &\le \mathbb{E}_{(x_t, x_1) \sim \pi_t} \|x_t - x_1\|^2 \nonumber \\
    &= (1-t)^2 \mathbb{E}_{(x_0, x_1) \sim \pi_{\text{OT}}} \|x_0 - x_1\|^2 = (1-t)^2 \Wtwo^2(p_0, p_1).
\end{align}
Note that equality holds in the second step as the infimum over all couplings is attained by $\pi_{\text{OT}}$.
Second, we prove equality by contradiction. Assume there exists an alternative valid coupling $\pi_t'$ at time $t$ that achieves a strictly lower cost
\begin{equation}
    \mathbb{E}_{(x_t, x_1) \sim \pi_t'} \|x_t - x_1\|^2 < (1-t)^2 \Wtwo^2(p_0, p_1).
\end{equation}
Using the invertible linear relation $x_0 = \frac{x_t - t x_1}{1-t}$, we can construct a corresponding coupling $\pi_0'$ at $t=0$
\begin{align}
    \mathbb{E}_{(x_0, x_1) \sim \pi_0'} \|x_0 - x_1\|^2 &= \mathbb{E}_{(x_t, x_1) \sim \pi_t'} \left\| \frac{x_t - t x_1}{1-t} - x_1 \right\|^2 \nonumber \\
    &= \frac{1}{(1-t)^2} \mathbb{E}_{(x_t, x_1) \sim \pi_t'} \|x_t - x_1\|^2 \nonumber \\
    &< \frac{1}{(1-t)^2} (1-t)^2 \Wtwo^2(p_0, p_1) = \Wtwo^2(p_0, p_1).
\end{align}
This contradicts the definition of $\Wtwo^2(p_0, p_1)$ as the absolute infimum over all valid couplings. Thus, equality holds 
$$\Wtwo(p_t, p_1) = (1-t) \Wtwo(p_0, p_1)$$
and \eqref{eq:app_theorem1} is proven. Differentiating with respect to $t$ yields 

$$\frac{\text{d}}{\text{d}t} \mathcal{S}(p_t) = - \Wtwo(p_0, p_1) < 0 ,$$ satisfying Property (2).
Property (1) directly follows from the non-negativity of the Wasserstein metric and since $(1-t) \le 1$ for $t \in [0, 1]$.
\end{proof}

\subsection{Proof of Proposition 2: Ambiguity of Time Coordinate}
\label{app:proof_ambiguity}

\begin{proposition}[Ambiguity of Time $t$]
\label{propapp:Ambiguity}
Let the initial state be defined as $x_0^{(\alpha,\beta)} = \alpha x_1 + \beta \nu$, where $\alpha, \beta \in \mathbb{R}^+$ are scaling parameters, $\nu$ is an additive noise signal, and $x_1$ is the target clean signal. For the linear interpolation path $x_t^{(\alpha,\beta)} = (1-t)x_0^{(\alpha,\beta)} + t x_1$ with $t\in[0,1)$, there exist a distinct parameter set $\{\alpha', \beta', t'\}$ with $t \neq t'$, $t'\in[0,1)$ and $(\alpha, \beta) \neq (\alpha', \beta')$ with $\alpha', \beta'\in \mathbb{R}^+$ such that
\begin{equation}
p\left(x_t^{(\alpha,\beta)}\right) = p\left(x_{t'}^{(\alpha',\beta')}\right).
\end{equation}

\end{proposition}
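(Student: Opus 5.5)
The plan is a coefficient-matching argument. Equality of distributions will follow from almost-sure equality of the random variables, both being built from the same pair $(x_1,\nu)$.

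First, expand the interpolation path. For any $(\alpha,\beta,t)$,
\begin{equation}
x_t^{(\alpha,\beta)} = \bigl((1-t)\alpha + t\bigr)x_1 + (1-t)\beta\,\nu .
\end{equation}
Both $x_t^{(\alpha,\beta)}$ and $x_{t'}^{(\alpha',\beta')}$ are therefore deterministic linear combinations of the same $(x_1,\nu)$. If the two coefficient pairs agree, the random variables coincide pointwise and hence have the same law. I will only establish this sufficient condition. Note that it is not necessary in general, since $\nu$ and $x_1$ might be linearly dependent; we do not need the converse.

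Second, impose the two matching equations $(1-t')\alpha' + t' = (1-t)\alpha + t$ and $(1-t')\beta' = (1-t)\beta$. Since $t'\neq 1$, these solve uniquely as
\begin{equation}
\beta' = \frac{1-t}{1-t'}\,\beta, \qquad \alpha' = \frac{(1-t)\alpha + t - t'}{1-t'} .
\end{equation}

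Third, pick $t'$ so that all side constraints hold: $t'\in[0,1)$, $t'\neq t$, and $\alpha',\beta'>0$.
\begin{itemize}
\item Positivity of $\beta'$ is automatic, because $1-t>0$, $1-t'>0$ and $\beta>0$.
\item Positivity of $\alpha'$ is equivalent to $t' < c := (1-t)\alpha + t$. Since $\alpha>0$ and $t<1$, $c$ is a strict convex-type combination with $c>t\ge 0$.
\end{itemize}
The admissible set for $t'$ is therefore $[0,\min(c,1))\setminus\{t\}$. This set contains the nonempty interval $(t,\min(c,1))$, so a valid $t'$ exists; for instance, take $t'=\tfrac{1}{2}\bigl(t+\min(c,1)\bigr)$.

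Finally, confirm $(\alpha',\beta')\neq(\alpha,\beta)$. This holds because $\beta'=\beta$ would force $t=t'$.

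The only delicate step is the existence of an admissible $t'$. It hinges on the strict inequality $c>t$, which uses $\alpha>0$. I also take care to choose $t'$ strictly between $t$ and $\min(c,1)$, which avoids the edge case $c\ge 1$.
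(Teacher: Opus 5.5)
Your proof is correct and follows essentially the same coefficient-matching route as the paper: you expand the path, match the coefficients of $x_1$ and $\nu$ to solve for $\beta'$ and $\alpha'$, and then exhibit an admissible $t'$. Your explicit choice $t'\in\bigl(t,\min(c,1)\bigr)$ is in fact more careful than the paper's existence step. It secures both $t'\neq t$ and the strict inequality $t'<c$ needed for $\alpha'>0$, whereas the paper's contradiction argument uses the non-strict condition $(1-t)\alpha+t\geq t'$ and never checks $t'\neq t$.
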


\begin{proof}
    We show that there exists $\alpha',\beta',t'$ such that $x_t^{(\alpha,\beta)}=x_{t'}^{(\alpha',\beta')}$ from which the claim follows immediately. For $x_t^{(\alpha,\beta)}=x_{t'}^{(\alpha',\beta')}$, the coefficients of $x_1$ and $\nu$ have to match, i.e.,
    \begin{equation}
        (1-t)\beta = (1-t')\beta' \qquad \Rightarrow \qquad \beta' = \frac{1-t}{1-t'}\beta.
    \end{equation}
    Note that as $\beta' > 0$ the chosen $\beta'$ is valid and as $\beta\neq\beta'$ it follows $t\neq t'$ as required.
    For the coefficient of $x_1$, we get
    \begin{equation}
        (1-t)\alpha +t = (1-t')\alpha' + t' \qquad \Rightarrow \qquad \alpha' = \frac{1-t}{1-t'}\alpha + \frac{t-t'}{1-t'},
    \end{equation}
    which is a valid choice if $\alpha'>0$, which is satisfied if $(1-t)\alpha +t \geq t'$.
    
    As the last step we have to show that such an $\alpha'$ exists. Let's assume for contradiction that for given $t$ and $\alpha$ no $t'$ exists such that $(1-t)\alpha +t \geq t'$, i.e.,
    \begin{equation}
        (1-t)\alpha +t < t'\quad \forall t'\in[0,1).
    \end{equation}
    As the left hand side is positive this is obviously wrong and we conclude the proof.
\end{proof}

\subsection{Proof of Proposition 3: Upper Bound of Flow-State}
\label{app:upperbound flowstate}

\begin{proposition}[Upper Bound of Flow-State]
\label{propapp:upper_boundmain}
Assuming without loss of generality  that $\mathbb{E}\|x_1\|^2 = \mathbb{E}\|\nu\|^2 = 1$, we obtain
\begin{equation}
    \flowstate(p_t) \leq |1-t| (|\alpha-1| + |\beta|)
    \label{eq:app_prop3_ineq}
\end{equation}
and in the special case of a convex combination, for $\beta = 1-\alpha$, where $x_0 = \alpha x_1 + (1-\alpha)\nu$ with $\alpha \in [0, 1]$
\begin{equation}
\Wtwo(p_t, p_1) \leq 2 |1-t| |\alpha-1|.
\end{equation}
\end{proposition}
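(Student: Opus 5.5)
The bound on the Flow-State follows from comparing the infimum in the definition of $\Wtwo(p_t,p_1)$ with the cost of one particular coupling. We take the natural coupling induced by the restoration model: the pair $(x_t,x_1)$ with $x_t=(1-t)x_0+tx_1$ and $x_0=\alpha x_1+\beta\nu$, where $x_1$ and $\nu$ are drawn jointly as in the data model. Since the infimum is no larger than the cost of any admissible coupling,
\begin{equation}
\flowstate(p_t)^2 \le \mathbb{E}\|x_t-x_1\|^2 .
\end{equation}

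Next we compute the displacement explicitly:
\begin{equation}
x_t-x_1=(1-t)(x_0-x_1)=(1-t)\bigl((\alpha-1)x_1+\beta\nu\bigr).
\end{equation}
This gives $\flowstate(p_t)\le |1-t|\,\bigl(\mathbb{E}\|(\alpha-1)x_1+\beta\nu\|^2\bigr)^{1/2}$.

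The main step is to bound the $L^2$ norm of this sum without assuming independence between $x_1$ and $\nu$. We apply Minkowski's inequality in $L^2(\Omega)$, which is the triangle inequality for $(\mathbb{E}\|\cdot\|^2)^{1/2}$. Equivalently, one can expand the square and bound the cross term $2|\alpha-1||\beta|\,\mathbb{E}|\langle x_1,\nu\rangle|$ by Cauchy--Schwarz, which gives at most $2|\alpha-1||\beta|$ under the normalization. Either route yields
\begin{equation}
\bigl(\mathbb{E}\|(\alpha-1)x_1+\beta\nu\|^2\bigr)^{1/2}\le |\alpha-1|\bigl(\mathbb{E}\|x_1\|^2\bigr)^{1/2}+|\beta|\bigl(\mathbb{E}\|\nu\|^2\bigr)^{1/2}=|\alpha-1|+|\beta| .
\end{equation}
Here we use $\mathbb{E}\|x_1\|^2=\mathbb{E}\|\nu\|^2=1$. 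This part must be stated carefully, since the cross term is exactly where a dependence between signal and noise could enter; Cauchy--Schwarz handles it without any independence assumption. Combining with the previous step gives the first claim.

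For the convex case we substitute $\beta=1-\alpha$ with $\alpha\in[0,1]$. Then $|\beta|=|1-\alpha|=|\alpha-1|$, and the bound becomes $2|1-t||\alpha-1|$. We expect no real obstacle beyond justifying that the induced pair is a valid element of $\Pi(p_t,p_1)$. This holds by construction, since its marginals are the law of $x_t$, which is $p_t$, and the law of $x_1$, which is $p_1$. The normalization is only a rescaling convention, so the result holds without loss of generality.
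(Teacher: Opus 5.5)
Your proposal is correct and follows essentially the same route as the paper: bound $\mathcal{W}_2(p_t,p_1)$ by the cost of the natural coupling $(x_t,x_1)$, factor the displacement as $(1-t)\bigl((\alpha-1)x_1+\beta\nu\bigr)$, and control its second moment with the triangle inequality plus Cauchy--Schwarz on the cross term, which you package as Minkowski's inequality in $L^2$. The convex case then follows, as in the paper, by substituting $\beta=1-\alpha$.
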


\begin{proof}
First assume that the assumption of $\mathbb{E}\|x_1\|^2 = \mathbb{E}\|\nu\|^2 = 1$ would be violated, i.e., $\mathbb{E}\Vert{}x_1\Vert{}^2 = a$ and $\mathbb{E}\Vert{}\nu\Vert{}^2 = b$, where $a, b \neq 1$. However, then we could define a simple reparameterization of the initial state $x_t = \alpha x_1 + \beta \nu$ as $x_t = \tilde{\alpha}\tilde{x}_1 + \tilde{\beta}\tilde{\nu}$, where
\begin{equation*}
\tilde{\alpha} = \alpha \sqrt{\mathbb{E}\Vert x_1\Vert^2}, \quad \tilde{\beta} = \beta \sqrt{\mathbb{E}\Vert \nu\Vert^2}, \quad \tilde{x}_1 = \frac{x_1}{\sqrt{\mathbb{E}\Vert x_1\Vert^2}}, \quad \tilde{\nu} = \frac{\nu}{\sqrt{\mathbb{E}\Vert \nu\Vert^2}}
\end{equation*}
and the assumption of $\mathbb{E}\Vert{}\tilde{x}_1\Vert{}^2 = \mathbb{E}\Vert{}\tilde{\nu}\Vert{}^2 = 1$ would hold with the new parameterization. Hence, we assume without loss of generality that $\mathbb{E}\|x_1\|^2 = \mathbb{E}\|\nu\|^2 = 1$ in the following.

By using the definition of $x_t$, we obtain
\begin{equation}
\Vert x_t - x_1\Vert = |1-t|\ \Vert(\alpha-1)x_1 + \beta \nu\Vert.
\end{equation}
By applying the definition of the Wasserstein-2 distance, we obtain an upper bound by evaluating the expectation over an arbitrary coupling (thereby dropping the infimum)
\begin{equation}
\mathcal{S}(p_t) = \Wtwo(p_t, p_1) \leq \sqrt{\mathbb{E}\Vert x_t - x_1\Vert^2} = |1-t| \sqrt{\mathbb{E}\Vert(\alpha-1)x_1 + \beta \nu\Vert^2}.
\end{equation}
Applying the triangle inequality to the term inside the square root yields
\begin{equation*}
\leq |1-t| \sqrt{\mathbb{E}\left( |\alpha-1|\ \Vert x_1\Vert + |\beta|\ \Vert \nu\Vert \right)^2}.
\end{equation*}
We expand the expectation and apply the Cauchy-Schwarz inequality $\mathbb{E}(xy)^2\leq\mathbb{E}(x^2)\mathbb{E}(y^2)$ to obtain
\begin{align*}
&= |1-t| \sqrt{|\alpha-1|^2 \mathbb{E}\Vert x_1\Vert^2 + 2|\alpha-1|\ |\beta|\ \mathbb{E}\left(\Vert x_1\Vert\ \Vert \nu\Vert\right) + |\beta|^2\ \mathbb{E}\Vert \nu\Vert^2}\\
&\leq |1-t| \sqrt{|\alpha-1|^2 \mathbb{E}\Vert x_1\Vert^2 + 2|\alpha-1|\ |\beta|\ \sqrt{\mathbb{E}\Vert x_1\Vert^2}\sqrt{\mathbb{E}\Vert \nu\Vert^2} + |\beta|^2\ \mathbb{E}\Vert \nu\Vert^2}.
\end{align*}
Substituting $\mathbb{E}\Vert{}x_1\Vert{}^2 = 1$ and $\mathbb{E}\Vert{}\nu\Vert{}^2 = 1$ yields
\begin{equation}
= |1-t| \sqrt{|\alpha-1|^2 + 2|\alpha-1||\beta| + |\beta|^2} = |1-t| (|\alpha-1| + |\beta|).
\end{equation}
For the special case of a convex combination of $x_1$ and $\nu$, i.e., by substituting $\beta = 1 - \alpha$ with $\alpha\in[0,1]$ into the upper bound \eqref{eq:app_prop3_ineq} gives
\begin{equation}
\Wtwo(p_t, p_1) \leq |1-t| (|\alpha-1| + |1-\alpha|) = 2 |1-t| |\alpha-1|,
\end{equation}
%
which concludes the proof.
\end{proof}

\subsection{Proof of Proposition 4: Approximation of Flow-State}
\label{app:discriminative boundnessproof}
\begin{proposition}[Approximation of Flow-State]
Let $\hat{x}_1\sim p_d$ be the estimate of $x_1\sim p_1$ obtained from a discriminative restoration model trained under a mean squared error objective $\mathcal{L} = \mathbb{E}[\Vert{}\hat{x}_1 - x_1\Vert{}^2]$. 
Then, we have
\begin{equation}
    |\mathcal{S}(p_t) - \hat{\mathcal{S}}(p_t)|^2 \le \mathcal{L}.
\end{equation}
\end{proposition}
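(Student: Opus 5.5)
The plan is to use the fact that $\Wtwo$ is a metric on the space of distributions with finite second moment, and then bound the distance between the two target distributions by the particular coupling that the discriminative model provides.

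\textbf{Step 1 (reverse triangle inequality).} By the triangle inequality for $\Wtwo$, applied in both directions,
\begin{equation*}
\Wtwo(p_t,p_1) \le \Wtwo(p_t,p_d) + \Wtwo(p_d,p_1)
\quad\text{and}\quad
\Wtwo(p_t,p_d) \le \Wtwo(p_t,p_1) + \Wtwo(p_1,p_d).
\end{equation*}
Together these give
\begin{equation*}
|\mathcal{S}(p_t) - \hat{\mathcal{S}}(p_t)| = |\Wtwo(p_t,p_1) - \Wtwo(p_t,p_d)| \le \Wtwo(p_d,p_1).
\end{equation*}

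\textbf{Step 2 (coupling bound).} The discriminative model produces $\hat{x}_1 = \mathcal{F}_\Omega(x_0)$ jointly with $x_1$. The joint law of the pair $(\hat{x}_1, x_1)$ is therefore an admissible coupling in $\Pi(p_d,p_1)$. Dropping the infimum in the definition of $\Wtwo$, as in the proofs of Propositions~1 and~3, gives
\begin{equation*}
\Wtwo^2(p_d,p_1) \le \mathbb{E}\|\hat{x}_1 - x_1\|^2 = \mathcal{L}.
\end{equation*}

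\textbf{Step 3 (combine).} Squaring the inequality from Step 1, which is legitimate because both sides are nonnegative, and inserting Step 2 yields
\begin{equation*}
|\mathcal{S}(p_t) - \hat{\mathcal{S}}(p_t)|^2 \le \Wtwo^2(p_d,p_1) \le \mathcal{L}.
\end{equation*}

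\textbf{Main obstacle.} The argument has no hard computation; the main care needed is justifying Step 1.
\begin{itemize}
\item We need $\Wtwo$ to be a genuine metric on $\mathcal{P}_2$, the distributions with finite second moment. This requires $p_t$, $p_1$ and $p_d$ all to have finite second moments. I would state this as a standing assumption; it holds for bounded speech and image signals and for a model with finite loss $\mathcal{L}$.
\item The triangle inequality itself can be cited from standard optimal transport theory, where it is proved via the gluing lemma.
\item We should note explicitly that $\mathcal{L}$ is taken over the same joint distribution of $(x_0, x_1)$ that defines $p_d$ as the law of $\hat{x}_1$, so that the coupling used in Step 2 is valid.
\end{itemize}
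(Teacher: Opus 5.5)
Your proposal is correct and matches the paper's proof step for step. Both use the reverse triangle inequality to get $|\mathcal{S}(p_t)-\hat{\mathcal{S}}(p_t)|\le \Wtwo(p_d,p_1)$, then bound $\Wtwo(p_d,p_1)\le\sqrt{\mathcal{L}}$ via the coupling of $(\hat{x}_1,x_1)$ induced by the model, and square. The only cosmetic difference is that the paper takes $\hat{x}_1=\mathcal{F}_\Omega(x_t)$ rather than $\mathcal{F}_\Omega(x_0)$, which changes nothing provided $p_d$ and $\mathcal{L}$ are defined through the same joint law, as you already note.
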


\begin{proof}
We have to upper bound the squared difference of the Flow-State $\mathcal{S}(p_t) = \Wtwo(p_t, p_1)$,
with $p_1$ denoting the true underlying data distribution of clean samples $x_1$, and $p_t$ denoting the distribution of intermediate degraded states $x_t$ and its surrogate $\hat{\mathcal{S}}(p_t) = \Wtwo(p_t, p_d)$, with $p_d$ denoting the distribution of the discriminative model's estimates (where $\hat{x}_1 = \mathcal{D}_{\Phi}(\mathcal{E}_\phi(x_t)) \sim p_d$ or $\hat{x}_1 = \mathcal{F}_\Omega(x_t)$).

%

\paragraph{Step 1: Bounding the Estimation Error via Inverse Triangle Inequality.}
Using the inverse triangle inequality for the Wasserstein-2 metric
\begin{equation}
|\Wtwo(p_t, p_1) - \Wtwo(p_t, p_d)| \le \Wtwo(p_1, p_d),
\end{equation}
we establish the following inequality on the absolute difference between the true and approximate Flow-States
    
%
\begin{equation}
|\mathcal{S}(p_t) - \hat{\mathcal{S}}(p_t)| \le \Wtwo(p_1, p_d).
\label{eq:app_prop4_wIneq}
\end{equation}

\paragraph{Step 2: Upper Bounding the Wasserstein Distance with Reconstruction Loss.}
By definition, the Wasserstein-2 distance between the estimated distribution $p_d$ and the true distribution $p_1$ is the infimum over all valid joint couplings $\pi \in \Pi(p_d, p_1)$
\begin{equation} 
\Wtwo(p_d, p_1) = \inf_{\pi \in \Pi(p_d, p_1)} \sqrt{\mathbb{E}_{(\hat{x}_1, x_1) \sim \pi}[\|\hat{x}_1 - x_1\|^2]}.
\end{equation}
%
The pairing of the target $x_1$ with the estimate $\hat{x}_1 = \mathcal{F}_\Omega(x_t)$ produced from the corresponding degraded state $x_t$ is described by a joint distribution of $(\hat{x}_1, x_1)$ whose first marginal is $p_d$ and whose second marginal is $p_1$. Note that there exist different joint distributions, which are collectively denoted by the set $\Pi(p_d, p_1)$, and that the Wasserstein-2 distance is based on an infimum over all of these joint distributions. Hence, by picking a particular joint distribution $p(x_t,x_1)$, we obtain an upper bound on the Wasserstein-2 distance
\begin{equation}
    \Wtwo(p_d, p_1) \le \left( \mathbb{E}_{p(x_t, x_1)}
    \left[ \Vert \mathcal{F}_\Omega(x_t) - x_1 \Vert^2 \right] \right)^{1/2}
    = \sqrt{\mathcal{L}},
\end{equation}
where $\mathcal{L} = \mathbb{E}_{p(x_t, x_1)} [ \Vert \mathcal{F}_\Omega(x_t) - x_1
\Vert^2 ]$ denotes the supervised mean squared reconstruction error of the
discriminative model. Substituting into~\eqref{eq:app_prop4_wIneq} yields
\begin{equation}
    \left| \mathcal{S}(p_t) - \hat{\mathcal{S}}(p_t) \right| \le \sqrt{\mathcal{L}},
\end{equation}
and squaring both sides concludes the proof.
\end{proof}

\subsection{Proof of Proposition 5: Flow-State Preservation}
\label{app:Flow-State PreservationProof}

\begin{proposition}[Flow-State Preservation]
\label{propapp:flow_state_preservation}
Assuming the discriminative model's architecture forms a Markov chain $x_t \to z \to \hat{x}_1$, where $z = \mathcal{E}_{\phi}(x_t)$ and $\hat{x}_1$ is the estimated clean signal. Then,
\begin{equation}
    \mathcal{I}(z;x_1)\geq \mathcal{I}(\hat{x}_1;x_1),
\end{equation}
where $\mathcal{I}$ denotes the mutual information.
\end{proposition}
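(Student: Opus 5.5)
The plan is to apply the data processing inequality (DPI) to a Markov chain that contains the clean target $x_1$. The stated chain $x_t \to z \to \hat{x}_1$ does not involve $x_1$, so the first step is to extend it to
\begin{equation*}
x_1 \to x_t \to z \to \hat{x}_1 .
\end{equation*}
This extension holds because $x_t$ is formed from $x_1$ together with the degradation, while $z=\mathcal{E}_{\phi}(x_t)$ and $\hat{x}_1=\mathcal{D}_\Phi(z)$ are computed from $x_t$ alone, through the frozen encoder and decoder, without further access to $x_1$. Formally, we will check the conditional factorization
\begin{equation*}
p(x_1,x_t,z,\hat{x}_1)=p(x_1)\,p(x_t\mid x_1)\,p(z\mid x_t)\,p(\hat{x}_1\mid z).
\end{equation*}
Here $p(z\mid x_t)$ and $p(\hat{x}_1\mid z)$ are point masses, since the networks are deterministic. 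The factor $p(\hat{x}_1\mid z)$ being a function of $z$ only is exactly the paper's Markov assumption, i.e.\ no skip connections from $x_t$ to the decoder.

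From the four-variable chain we extract the sub-chain $x_1 \to z \to \hat{x}_1$. By the chain rule for mutual information,
\begin{equation*}
\mathcal{I}(x_1; z,\hat{x}_1)=\mathcal{I}(x_1;z)+\mathcal{I}(x_1;\hat{x}_1\mid z)=\mathcal{I}(x_1;\hat{x}_1)+\mathcal{I}(x_1;z\mid \hat{x}_1).
\end{equation*}
Conditional independence of $x_1$ and $\hat{x}_1$ given $z$ gives $\mathcal{I}(x_1;\hat{x}_1\mid z)=0$, and conditional mutual information is always nonnegative, so $\mathcal{I}(x_1;z\mid\hat{x}_1)\ge 0$. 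Combining these yields $\mathcal{I}(z;x_1)\ge \mathcal{I}(\hat{x}_1;x_1)$, which is the claim. An even shorter route is to note that $\hat{x}_1=\mathcal{D}_\Phi(z)$ is a deterministic function of $z$ and cite the DPI for functions directly.

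The main obstacle is justifying that the sub-chain $x_1\to z\to\hat{x}_1$ is genuinely Markov, i.e.\ that $x_1$ carries no influence on $\hat{x}_1$ except through $z$. I would handle this first by marginalizing $x_t$ out of the factorization above, which gives $p(x_1,z,\hat{x}_1)=p(x_1,z)\,p(\hat{x}_1\mid z)$, and make explicit that this step uses the assumption of no path from $x_t$ to $\hat{x}_1$ bypassing $z$.

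A secondary technical point is that, for deterministic maps on continuous spaces, mutual informations can be infinite. I would therefore note that the inequality should be read in the extended reals, or alternatively use the general measure-theoretic DPI, which covers that case without change.
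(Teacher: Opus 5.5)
Your proposal is correct and follows the paper's proof. The paper likewise extends the assumed chain to $x_1 \rightarrow x_t \rightarrow z \rightarrow \hat{x}_1$, justifies this by the no-skip-connection assumption on the decoder, and concludes by the data processing inequality; your chain-rule derivation and your remarks on deterministic maps and possibly infinite mutual information spell out what the paper leaves to the citation of the inequality.
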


\begin{proof}
We model the discriminative restoration process as the Markov chain
\[
x_1 \rightarrow x_t \rightarrow z \rightarrow \hat{x}_1,
\]
assuming that the decoder $\mathcal{D}_{\Phi}$ receives no direct skip connection from the degraded input $x_t$. Under this assumption, the claim directly follows from the data processing inequality.


\end{proof}

\subsection{Flow-State Preservation}
\label{app:Flow-State preserve}

For the Flow-State preservation experiments discussed in
Section~\ref{res:Flow-State Preservation}, we follow the FlowSE
training framework\footnote{\url{https://github.com/seongq/flowmse}},
using the same training parameters and implementation settings as
those employed for the \ac{DFM} experiments in
Section~\ref{sec:seexp}. For predicting the Flow-State-related quantities, iSNR, gSNR and $\zeta$, we adapt the
PESQNet architecture from the SNR-Aligned DiffSE
implementation\footnote{\url{https://github.com/yh-jun/SNR-Aligned_diffSE/blob/main/sgmse-bbed/sgmse/backbones/snrnet.py}}.
The anchor-based predictor containing approximately
$1.26$M parameters processes the signal pair $\{\mu_{\zeta}, \mathcal{F}_{\Omega}(\mu_\zeta)\}$ described in
Section~\ref{res:Flow-State Preservation}. For the latent-based predictor, we replace all the
two-dimensional convolutional layers with one-dimensional
convolutions to accommodate latent inputs of shape
$\texttt{batch size} \times \texttt{channels} \times \texttt{time frames}$. This modification results in a substantially
smaller model for the latent-based predictor relative to the anchor-based predictor with approximately $447$k parameters.

The predictors are trained for $800$k steps using the same
DNS Challenge training data~\cite{reddy2020interspeech} employed for
the \ac{CFM} experiments. The evaluation is performed on the DNS
Challenge synthetic non-reverberant test set, which contains
$150$ samples. For each sample, we generate $10$ intermediate
states according to
\eqref{eq:path} by uniformly sampling one interpolation coordinate
$\zeta$ from each of ten non-overlapping intervals of width $0.1$ over $[0,1]$.
This results in a total of $1{,}500$ evaluation samples.

\subsection{Flow-State Monotonicity}
\label{sec:Latent Geometry}

In the following, we further evaluate \ac{DFSR} $z=\mathcal{E}_{\phi}(x_{\zeta})$ with respect to Flow-State monotonicity for different degradation levels expressed in terms of \ac{SNR}. For this experiment, a reference latent set is first constructed by extracting frame-level representations from clean speech using the frozen DCCRN \cite{hu2020dccrn} encoder. The same utterance is then corrupted with two different noise types (rain and machinery noise as discussed in Section \ref{res:Flow-State Preservation}) across \ac{SNR}s ranging from $-20$~dB to $60$~dB, and the resulting latent representations are compared with the clean latent set.

\begin{figure}[!t]
\centering
\includegraphics[width=\columnwidth]{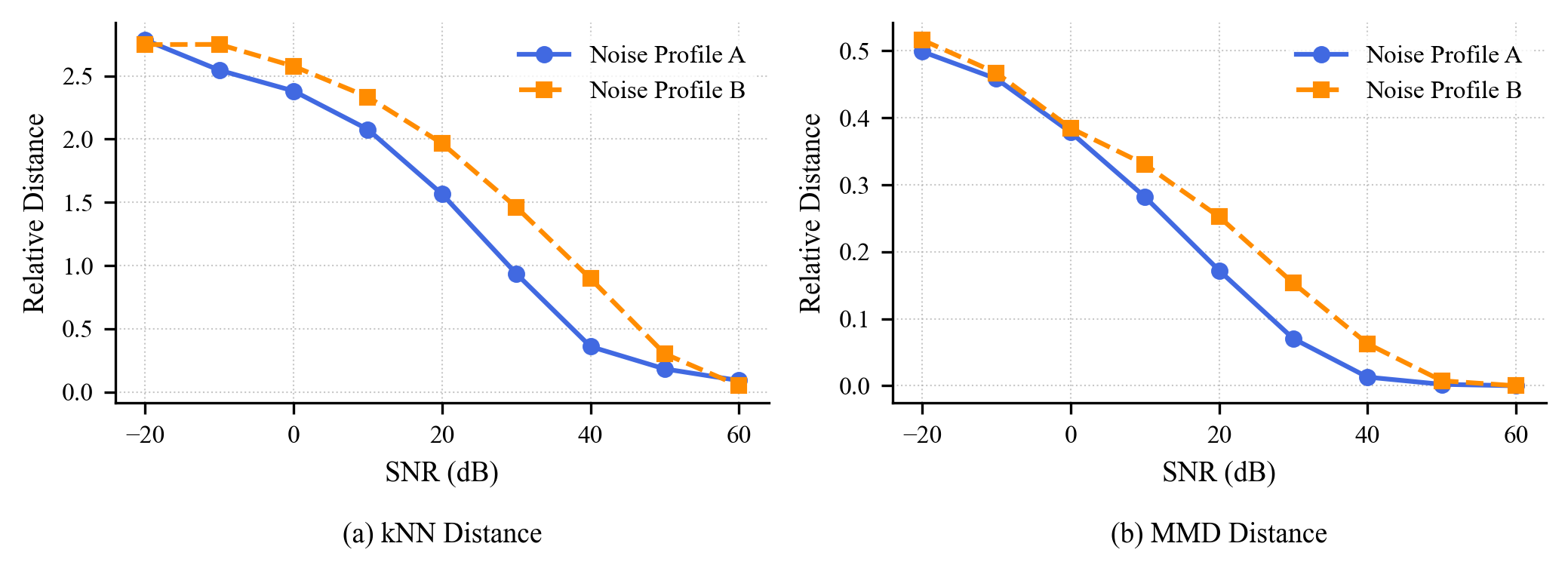}
\caption{Relative kNN and MMD distances between noisy latent representations and the clean latent set across different SNR levels. Both measures decrease monotonically with increasing SNR, indicating that intermediate samples progressively approach the clean latent representations.}
\label{fig:latent_geometry_snr}
\end{figure}

To quantify latent proximity, we employ two measures: the average $k$-nearest-neighbor (kNN) distance and the Maximum Mean Discrepancy (MMD) distance~\cite{gretton2012kernel} between noisy and clean latent representations. Both measures are reported relative to their corresponding clean-reference values to account for the  variability of the clean latent set.

\begin{table}[!t]
\centering
\begin{tabular}{lc}
\toprule
Metric & Value \\
\midrule
Pearson ($r$) & $-0.9645$ \\
Spearman ($\rho$) & $-0.9832$ \\
Monotonicity violations & $0/8$ \\
\bottomrule
\end{tabular}
\caption{Monotonic relationship between SNR and latent proximity.}
\label{tab:mmd_results}
\end{table}

Fig.~\ref{fig:latent_geometry_snr} shows that both, the relative kNN distance and the MMD distance, decrease monotonically as \ac{SNR} increases, indicating that \ac{DFSR} $z$ move progressively closer to the clean latent set as degradation decreases. Table~\ref{tab:mmd_results} further confirms this behavior, yielding strong monotonic relationships with \ac{SNR} (Spearman $\rho=-0.9832$) without a single monotonicity violation (i.e., the relative distance for both measures decreased across the  $8$ evaluated \ac{SNR} levels).

These results further support our experimental observation provided in Section~\ref{sec:monotonicity}, that a discriminative latent space serves as a meaningful descriptor of the underlying Flow-State as the relative distance toward the clean latent representations shrinks as the degradation decreases.

\subsection{Speech Enhancement}
\label{app:SE Extras}

As mentioned in \eqref{eq:dfm}, the \ac{DFM} training objective is formulated as
$$
\mathcal{L}_{\mathrm{DFM}}(\theta)
=\mathbb{E}\left[\left\|
v_\theta(x_{\zeta},z,x_0) -v^\star\right\|_2^2\right],$$
where $z \in \mathbb{R}^{C_{\text{in}} \times T_{\text{orig}}}$ is the \ac{DFSR} derived from a pretrained discriminative encoder $\mathcal{E}_{\phi}$, with $C_{\text{in}}$ denoting the number of output channels of the encoder and $T_{\text{orig}}$ the original number of outputted time frames. The representation is first projected onto a $256$-dimensional feature space using a linear layer and subsequently linearly interpolated along the temporal axis to match the input-frame dimension $T$ of the velocity network $v_\theta$. Here, $T$ denotes the number of short-time Fourier transform (STFT) frames obtained from the degraded observation $x_0$ using the STFT configuration specified in Section~\ref{sec:seexp}. This temporal alignment is necessary because the pretrained discriminative model might use different STFT parameters than those used for training the velocity network of our \ac{DFM}, resulting in different time resolutions. Formally, this feature projection (applied along the channel dimension) and temporal interpolation are defined as
\begin{align}
\bar{z} &= \mathbf{W} z, \\
c &= \text{Interp}(\bar{z}, T),
\end{align}
where $\mathbf{W} \in \mathbb{R}^{256 \times C_{\text{in}}}$ is the learnable weight matrix of the linear projection, resulting in the projected \ac{DFSR} $\bar{z} \in \mathbb{R}^{256 \times T_{\text{orig}}}$. The function $\text{Interp}(\cdot)$ denotes linear interpolation along the time axis, yielding the aligned conditioning signal $c \in \mathbb{R}^{256 \times T}$. For supporting multi-scale conditioning in the backbone network, these temporal features $c$ are additionally downsampled by a factor of $2$ at each successive residual block using a 1D finite impulse response (FIR) filter as used originally in the NCSN++~\cite{song2021score} architecture along the time dimension. We further condition each residual block of the backbone NCSN++~\cite{song2021score} network used in \ac{DFM} analogously to the additive conditioning method utilized  for time conditioning in~\cite{lee2025flowse,richter2023speech} as
\begin{equation}
h_{\text{out}} = h + \bar{c},
\end{equation}
where $h \in \mathbb{R}^{C_{\text{res}} \times F \times T}$ denotes the intermediate feature map of the residual block of the NCSN++ backbone network with $C_{\text{res}}$ channels and $F$ frequency bins. The conditioning signal $c$ is broadcasted across the frequency dimension (denoted as $\bar{c}$) before addition to $h$.

\begin{table}[!t]
\centering
\setlength{\tabcolsep}{3pt} 
\begin{tabular}{llccc}
\toprule
\textbf{Category} & \textbf{Method} & \textbf{Conditioning} & \begin{tabular}[c]{@{}c@{}}\textbf{SI-SDR}\\($\uparrow$)\end{tabular} & \begin{tabular}[c]{@{}c@{}}\textbf{PESQ}\\($\uparrow$)\end{tabular} \\
\midrule
Unprocessed & Noisy & -- & 9.06 & 1.58 \\
\midrule
Discriminative & DCCRN & NA & 17.36 & 2.91 \\
& GCRN & NA & 16.71 & 2.63 \\
& TaylorSENet & NA & 16.59 & 2.65 \\
& ULCNet & NA & 16.70 & 2.56\\
\midrule
CFM & FlowSE & $t$& 18.99 & 2.86 \\
\midrule
\multirow{4}{*}{Proposed} & \multirow{4}{*}{\textbf{DFM (Ours)}} & GCRN & \textbf{20.04} & 2.95 \\
& & TaylorSENet & 19.65 & 2.90 \\
& & ULCNet & 19.49 & 2.95 \\
& & DCCRN & 19.63 & \textbf{2.96} \\
\bottomrule
\end{tabular}
\caption{\ac{DFM} results on the DNS Challenge synthetic non-reverberant test set for utilizing discriminative features from different  discriminatively trained \ac{SE} methods.}
\label{tab:DNSE1}
\end{table}
\subsection{Experiment with Different Discriminative Models}
\label{sec:Discriminative_Models}

To evaluate the generalization capability of our  proposed \ac{DFM} for different discriminative representations, we extract the intermediate discriminative features $z$ from four diverse \ac{SE} architectures: DCCRN~\cite{hu2020dccrn}, ULCNet~\cite{shetu2023ultra}, TaylorSENet~\cite{ijcai2022p582}, and GCRN~\cite{tan2019learning}. 

\subsubsection{DCCRN} 
The DCCRN~\cite{hu2020dccrn} employs a complex-valued UNet-like symmetric encoder-decoder architecture with a recurrent bottleneck to estimate a complex ideal ratio mask. We extract the conditioning information $z \in \mathbb{R}^{1024 \times T_{\text{orig}}}$ from the output of the affine transform following the complex-valued LSTM. We implement the DCCRN model following the official implementation available at GitHub \footnote{\url{https://github.com/PoKoHA/Speech_Enhancement-DCCRN}}.

\subsubsection{ULCNet} 
ULCNet~\cite{shetu2023ultra} is an ultra-lightweight  network designed for low complexity \ac{SE}. It utilizes two stage network architecture, where the first stage enhances the signal magnitude and the second stage refines the magnitude and the phase jointly. We extract the intermediate discriminative features from the output of the first stage, $z \in \mathbb{R}^{256 \times T_{\text{orig}}}$. We implement the ULCNet model by adapting the implementation provided at GitHub\footnote{\url{https://github.com/narrietal/Fast-ULCNet}}.

\subsubsection{TaylorSENet}
TaylorSENet~\cite{ijcai2022p582} formulates \ac{SE} as a Taylor series expansion, decomposing the task into a coarse zeroth-order approximation and a higher-order refinement stage. We utilize the coarse zeroth-order estimate, reshaped across the frequency and feature dimensions, as the discriminative representation $z \in \mathbb{R}^{322 \times T_{\text{orig}}}$. We implement the TaylorSENet model following the official implementation available at GitHub\footnote{\url{https://github.com/Andong-Li-speech/TaylorSENet}}.

\subsubsection{GCRN}
The GCRN~\cite{tan2019learning} learns a complex-valued spectral mapping using an encoder-decoder network equipped with gated linear units (GLUs) to enhance information flow. The real and imaginary parts of the spectrogram are estimated by separate decoders. We extract the latent conditioning features $z \in \mathbb{R}^{1024 \times T_{\text{orig}}}$ directly from the output of the LSTM bottleneck. We implement the DCCRN model following the official implementation available at GitHub\footnote{\url{https://github.com/JupiterEthan/GCRN-complex}}.

As shown in Table~\ref{tab:DNSE1}, our proposed \ac{DFM} consistently outperforms both the time-conditioned \ac{CFM} (FlowSE) baseline and all standalone discriminative models in terms of SI-SDR and PESQ, regardless of the chosen discriminative representation. Specifically, \ac{DFM} conditioned on GCRN features achieves the highest SI-SDR ($20.04$~dB), while conditioning on DCCRN yields the highest PESQ ($2.96$). A similar trend is also observed for SCOREQ-based metrics.

\begin{table}[!t]
\centering
\setlength{\tabcolsep}{4pt}
\begin{tabular}{lcccc}
\toprule
\textbf{Input} & \textbf{STFT Match} & \begin{tabular}[c]{@{}c@{}}\textbf{SI-SDR}\\($\uparrow$)\end{tabular} & \begin{tabular}[c]{@{}c@{}}\textbf{PESQ}\\($\uparrow$)\end{tabular} \\
\midrule
$\mu_\zeta$ & No & 19.48 & 2.91 \\
$\mu_\zeta$ & Yes & 19.87 & 2.94 \\
$x_\zeta$ & No & 19.74 & 2.89 \\
$x_\zeta$ & \textbf{Yes} & \textbf{19.86} & \textbf{2.96} \\
$x_0$ & No & 19.55 & 2.92 \\
$x_0$ & Yes & 19.63 & 2.96 \\
\bottomrule
\end{tabular}
\caption{Ablation study of DFM on the DNS Challenge synthetic non-reverberant test set where the discriminative model used for conditioning is trained with different inputs and STFT configurations.}
\label{tab:DFM_ablation}
\end{table}

\subsection{Impact of Input Features for Discriminative Models on DFM Performance}
\label{sec:Impact_Discriminative_Models}

We also evaluated the impact of different input features and STFT parameters matched and mismatched scenarios for the discriminative model used in our proposed \ac{DFM}. For this purpose, we trained the DCCRN \cite{hu2020dccrn} model with three different input features ($\mu_\zeta$, $x_\zeta$, and $x_0$) using STFT parameters with matched (FFT length $510$, window length $510$, hop length $128$) and mismatched (FFT length $512$, window length $400$, hop length $100$) configurations relative to the original \ac{DFM} training.

As observed in Table~\ref{tab:DFM_ablation}, \ac{DFM} conditioned with features from any of the evaluated discriminative models trained with different input features obtains superior performance compared to CFM-related baselines. Discriminative models trained with $x_\zeta$ achieve the best SI-SDR and PESQ of $19.86$ and $2.96$, respectively. This can be intuitively understood, as using input $x_\zeta$ and $x_1$ as training targets can be viewed as an ARF training paradigm with a data-prediction objective for \ac{CFM}. It's important to note that discriminative models trained with $x_0$ as input also help in maintaining comparable results for \ac{DFM} (SI-SDR of $19.63$ and PESQ of $2.96$). 
Furthermore, our experiments suggest that training the discriminative model with STFT parameters matched to the \ac{DFM} training helps improve performance. However, the performance improvement seems to be minimal, suggesting that \ac{DFM} relies more on global discriminative representation alignment than on exact frame-level alignment. These results highlight that for \ac{DFM}, no particular training pipeline is needed for the discriminative model; rather, even an out-of-the-box discriminative model can be used to infer discriminative Flow-State representations.

\begin{table}[!t]
\centering
\setlength{\tabcolsep}{4.5pt} 
\begin{tabular}{lcccc}
\toprule
\textbf{Method} & \begin{tabular}[c]{@{}c@{}}\textbf{SI-SDR}\\($\uparrow$)\end{tabular} & \begin{tabular}[c]{@{}c@{}}\textbf{PESQ}\\($\uparrow$)\end{tabular} & \begin{tabular}[c]{@{}c@{}}\textbf{SCOREQ}\\ \textbf{Ref.} ($\downarrow$)\end{tabular} & \begin{tabular}[c]{@{}c@{}}\textbf{SCOREQ}\\ \textbf{Non-Ref.}($\uparrow$)\end{tabular} \\
\midrule
Noisy & 9.06 & 1.58 & 0.93 & 2.78 \\
Clean & -- & -- & 0 & 4.60\\
\midrule
CFM  & 6.81 & 1.31 & 0.76 & 3.2\\
ARF & 6.86 & 1.31 & 0.7 & 3.35\\
\textbf{DFM (Ours)} & \textbf{7.20} & \textbf{1.34} & \textbf{0.67} & \textbf{3.46} \\
\bottomrule
\end{tabular}
\caption{Results on the DNS Challenge non-reverberant test set without using the noisy signal $x_0$ for conditioning the velocity network.}
\label{tab:DNSNy}
\end{table}

\subsection{Impact of Conditioning the Velocity Network with the Noisy Observed Signal}
\label{sec:Impact_Noisy_Signal}

In this experiment, we study the impact of using the noisy signal, i.e., the initial observation, for conditioning the velocity networks in the \ac{CFM} and \ac{DFM} training objectives. As mentioned in \eqref{eq:dfm}, the \ac{DFM} training objective is formulated as
$$
\mathcal{L}_{\mathrm{DFM}}(\theta)
=
\mathbb{E}
\left[
\left\|
v_\theta(x_{\zeta},z,x_0)-v^\star
\right\|_2^2
\right].$$
For this experiment, we omit the noisy signal $x_0$ from conditioning of the velocity network. Then, the \ac{DFM} training objective can be reformulated as
\begin{equation}
   \tilde{\mathcal{L}}_{\mathrm{DFM}}(\theta)
=
\mathbb{E}
\left[
\left\|
v_\theta(x_{\zeta},z)-v^\star
\right\|_2^2
\right] .
\end{equation}
The results shown in Table \ref{tab:DNSNy} depict that discarding the noisy signal from conditioning the velocity network significantly degrades the performance of all \ac{CFM} variants. In reference-based objective metrics, the results appear even worse than the noisy signal itself. However, in our informal listening tests, we observe that all methods still denoise sufficiently, yet suffer from an energy mismatch in the generated signal. This can be attributed to the fact that, without accessing the original noisy signal, the velocity network cannot faithfully preserve the energy of the target signal, thereby degrading performance on reference-based objective metrics. On the contrary, we observe improvements for the enhanced signals across all methods using DNN-based intrusive and non-intrusive metrics, which aligns with our informal subjective listening. Compared to the baseline methods, \ac{DFM} achieves superior performance, further proving the effectiveness of the proposed method.

\begin{algorithm}[!t]
\caption{Adaptive Inference}
\label{alg:adaptive_inference}
\begin{algorithmic}[1]
\Require Difficulty estimate $d$, max budget $N_{\max}$, reverse interval $T_{\text{rev}}$, decay $\lambda$, Gaussian noise prefactor $\sigma$, initial noisy observation $x_0$, velocity network $v_\theta$, pre-trained discriminative encoder $\mathcal{E}_\phi$
\State Compute $w_i = \frac{\exp[-\lambda(i-1)]}{\sum_{j=1}^{N_{\max}}\exp[-\lambda(j-1)]}$ for $i=1 \dots N_{\max}$
\State Compute cumulative thresholds $\tau_i = T_{\text{rev}} \sum_{j=1}^{i} w_j$
\State Select $\widehat{N} \gets \min \{ i\in\mathbb{N} : d \le \tau_i \}$
\State Set $x^{(0)}=x_0 + \sigma\epsilon$ with $\epsilon \sim \mathcal{N}(0,1)$
\If{$\widehat{N} \le 1$}
    \State $\gamma \gets T_{\text{rev}}$
\Else
    \State Compute vector of normalized tail weights $w^{\text{tail}}$ of length $\widehat{N}-1$ using decay $\lambda$ using \eqref{eq:tailweights}
    \State Concatenate $\gamma \gets \left[\gamma_1, d\  w^{\text{tail}}\right]$
    \State Normalize $\gamma \gets \frac{T_{\text{rev}}}{\sum_{n=1}^{\widehat{N}} \gamma_n}\gamma$
\EndIf
\For{$n=0 \dots \widehat{N}-1$}
    \State $x^{(n+1)} \gets x^{(n)} +\gamma_n v_\theta\left( x^{(n)},
\mathcal{E}_{\phi}(x^{(n)}), x_0 \right)$
\EndFor
\State \Return $x^{(\widehat{N})}$
\end{algorithmic}
\end{algorithm}

\subsection{Adaptive DFM Inference}
\label{app:adaptive_inference}

To implement the adaptive inference, we investigate three Flow-State-related quantities; the \ac{DFM} interpolation coordinate $\zeta$,  iSNR, and gSNR (SNRs are normalized $\in [0,1]$ following Equation (14)), as described in Section~\ref{sec:adaptive_inference}, as the pseudo-sample-level difficulty estimate $d \in [0,1]$ (e.g., $d = 1-\hat{\zeta}$ or $d = 1-\text{iSNR}$), because these Flow-States-indicator quantities  also give some indication of how close an observed sample is to the clean speech manifold. Furthermore, inspired by the exponential reverse integration schedule proposed in \cite{wang2026rethinking}, we utilize exponentially decaying weights to define computational thresholds to estimate the required \ac{NFE} for an observed noisy sample $x_0$. 

Let $N_{\max}\in\mathbb{N}$ denote the maximum computational budget (maximum \ac{NFE}), $\lambda \in (0,1]$ the decay parameter controlling the decision boundaries, and $T_{\text{rev}} \in (0,1]$ the reverse interval representing the starting point of the reverse integration. We calculate empirically motivated normalized weights $w_i$ for each step $i = 1, \ldots, N_{\max}$ as
\begin{equation}
w_i = \frac{\exp[-\lambda(i-1)]}{\sum_{j=1}^{N_{\max}}\exp[-\lambda(j-1)]} \in (0,1].
\label{eq:tailweights}
\end{equation}
The corresponding cumulative thresholds $\tau_i$, which define the decision boundaries for a given sample's difficulty estimate $d$, are given by
\begin{equation}
\tau_i = T_{\text{rev}} \sum_{j=1}^{i} w_j, \quad i=1,\ldots,N_{\max}.
\end{equation}
The required \ac{NFE} $\widehat{N}$ is estimated as
\begin{equation}
\widehat{N} = \min \{ i \in \mathbb{N} : d \le \tau_i \}.
\end{equation}
When $\widehat{N} \ge 1$, the step-size schedule $\gamma = [\gamma_1, \gamma_2, \ldots, \gamma_{\widehat{N}}]$ is also constructed using the difficulty estimate $d$. The initial step size is set to $\gamma_1 = 1 - d$, and the subsequent step sizes $\gamma_n$ are chosen according to normalized tail weights $w^{\text{tail}}_n$ (see Algorithm~1) computed using the decay parameter $\lambda$
\begin{equation}
w^{\text{tail}}_n = \frac{\exp[-\lambda(n-1)]}{\sum_{j=1}^{\widehat{N}-1}\exp[-\lambda(j-1)]}, \quad n=1,\ldots,\widehat{N}-1.
\end{equation}
The step sizes are defined such that $\sum_{n=1}^{\widehat{N}} \gamma_n = T_{\text{rev}}$. The full procedure is summarized in Algorithm~\ref{alg:adaptive_inference}. In this work, we employ a maximum budget of $N_{\max} = 5$, $T_{\text{rev}} = 1.0$, and $\lambda = 0.3$. We empirically found these parameters based on the evaluation of
a PESQ-based reconstruction error for
$\mathcal{L}_{\mathrm{rec}}$, with the threshold set to
$\xi=0.10$, as defined in \eqref{eq:optimalN}.

\noindent\textbf{Example:} 
Consider a maximum budget $N_{\max} = 5$, $T_{\text{rev}} = 1.0$, a decay parameter $\lambda = 0.3$, and a difficulty estimate $d = 0.61$. 
First, the cumulative decision thresholds $\tau_i$ are computed from the normalized weights $w_i$, yielding $\tau = [0.3336, 0.5808, 0.7639, 0.8995, 1.0000]$. 
Evaluating the condition $d \le \tau_i$ selects an effective budget of $\widehat{N} = 3$ since $0.61 \le 0.7639$. 
Next, the step-size schedule $\gamma$ is constructed: the initial step size is set to $\gamma_1 = 1 - d = 0.3900$, and the remaining budget $d = 0.61$ is distributed across the remaining two tail steps using normalized exponential weights, resulting in $\gamma_{\text{tail}} = [0.3504, 0.2596]$. Thus, the final step-size schedule is $\gamma = [0.3900, 0.3504, 0.2596]$.

\subsection{Speech Dereverberation}
\label{app:SE DE}
To further evaluate the generalization capabilities of \ac{DFM} across various \ac{SE} tasks, we examine speech dereverberation, which relies on a convolutive signal model
\begin{equation}
x_0 = x_1 * g,
\end{equation}
where $*$ denotes the convolution operator, $g$ represents the room impulse response, and $x_1$ represents the dry speech. We adopt the same training objective for \ac{DFM} as defined in \eqref{eq:dfm}
\begin{equation*}
\mathcal{L}_{\mathrm{DFM}}(\theta) = \mathbb{E} \left[ \left\| v_{\theta}(x_{\zeta}, z,x_0) - v^\star \right\|_2^2 \right],
\end{equation*}
where $z=\mathcal{E}_{\phi}(x_{\zeta})
$ is derived from a pretrained encoder $\mathcal{E}_{\phi}$ of a discriminative speech dereverberation model.

\subsubsection{Training Details}
We use the clean speech from the DNS Challenge dataset. We simulated the reverberated signals using the \texttt{pyroomacoustics}\footnote{\url{https://github.com/LCAV/pyroomacoustics}} Python package with $T_{60}$ ranging from $0.3$ to $1\text{ s}$, following a publicly available simulation script\footnote{\url{https://github.com/sp-uhh/deep-non-linear-filter/blob/main/src/data/data_gen_var_pos.py}} with default room configurations for single-microphone setups. We simulated a total of $100$ hours of training data and an $8$-hour disjoint validation dataset. We first train a discriminative DCCRN model using the same configuration as discussed above in Section~\ref{app:SE Extras} to estimate the dry speech $x_1$. In the next step, we train the \ac{DFM} model with the same configurations as stated above for $600\text{k}$ steps.

\subsubsection{Results}
We evaluate the  proposed \ac{DFM} against the baseline DCCRN and CFM models. For evaluation, we use the DNS Challenge reverberant synthetic test set \cite{reddy2020interspeech} and the clean reference from the corresponding DNS Challenge non-reverberant synthetic test set \cite{reddy2020interspeech} as the target signal. The results shown in Table \ref{tab:DNSDR} depict that, for the dereverberation task, \ac{DFM} also outperforms or achieves comparable results to the baseline CFM method. This demonstrates the generalizability of the proposed \ac{DFM} method to other \ac{SE} tasks and non-additive distortions. 

It is important to note that, in this experiment, the performance gains—at least with respect to the SCOREQ metric; are statistically negligible. These results can be intuitively explained by the inferior performance of the discriminative DCCRN model, and can be further related to Proposition 4 regarding the approximation of the Flow-State. The inferior performance of the discriminative model causes \ac{DFSR} to diverge from the true Flow-State, hence, resulting in only marginal gains for \ac{DFM}. Furthermore, estimating dry speech with a mask-based method like DCCRN is inherently difficult; therefore, we hypothesize that with a better-defined training objective  or  a better discriminative model without masking, performance can be improved, allowing the true benefits of \ac{DFM} to be realized for this speech dereverberation task.

\begin{table}[!t]
\centering
\setlength{\tabcolsep}{4.5pt} 
\begin{tabular}{lcccc}
\toprule
\textbf{Method} & \begin{tabular}[c]{@{}c@{}}\textbf{SI-SDR}\\($\uparrow$)\end{tabular} & \begin{tabular}[c]{@{}c@{}}\textbf{PESQ}\\($\uparrow$)\end{tabular} & \begin{tabular}[c]{@{}c@{}}\textbf{SCOREQ}\\ \textbf{Ref.} ($\downarrow$)\end{tabular} & \begin{tabular}[c]{@{}c@{}}\textbf{SCOREQ}\\ \textbf{Non-Ref.}($\uparrow$)\end{tabular} \\
\midrule
Noisy &-31.55  &1.32  &0.84 & 2.99 \\
Clean &-  &  &0 &4.60  \\

\midrule
DCCRN  & -29.06 & 1.91 & 0.87 & 2.73\\
CFM & -27.64 & 2.35 & 0.59& 3.45 \\
\textbf{DFM (Ours)} & \textbf{-27.19} & \textbf{2.44} & \textbf{0.58} & \textbf{3.47} \\
\bottomrule
\end{tabular}
\caption{Results on the DNS Challenge reverberant test set for a speech dereverberation task.}
\label{tab:DNSDR}
\end{table}

\begin{figure}[!t]
\centering
\begin{tikzpicture}[
    font=\footnotesize,
    node distance=3.2mm,
    mainblock/.style={
        draw=black!65,
        line width=0.55pt,
        rounded corners=2pt,
        fill=white,
        align=center,
        text width=4.55cm,
        minimum height=6.5mm,
        inner sep=3pt
    },
    inputblock/.style={
        mainblock,
        fill=green!8,
        draw=green!35!black
    },
    featureblock/.style={
        mainblock,
        fill=orange!9,
        draw=orange!55!black
    },
    outputblock/.style={
        mainblock,
        fill=blue!7,
        draw=blue!45!black
    },
    detailblock/.style={
        draw=black!55,
        line width=0.5pt,
        rounded corners=2pt,
        fill=gray!4,
        align=left,
        text width=2.65cm,
        inner sep=4pt,
        font=\scriptsize
    },
    stagebox/.style={
        draw=black!30,
        dashed,
        rounded corners=3pt,
        inner sep=4pt
    },
    flowarrow/.style={
        -{Stealth[length=2.0mm,width=1.4mm]},
        line width=0.7pt,
        draw=black!70
    },
    detailarrow/.style={
        dashed,
        line width=0.55pt,
        draw=black!50
    }
]

\node[inputblock] (input)
    {\textbf{Input}\\
     $x_0\in\mathbb{R}^{1\times 28\times 28}$};

\node[mainblock, below=of input] (stem)
    {\textbf{Convolutional Layer}\\
     $\operatorname{Conv}_{3\times3}(1,24)$
     $\rightarrow$ BN $\rightarrow$ SiLU};

\node[mainblock, below=of stem] (b1)
    {\textbf{SepConv block 1}\\
     $24\rightarrow32$, stride $1$};

\node[mainblock, below=of b1] (b2)
    {\textbf{SepConv block 2}\\
     $32\rightarrow48$, stride $2$};

\node[mainblock, below=of b2] (b3)
    {\textbf{SepConv block 3}\\
     $48\rightarrow64$, stride $1$};

\node[mainblock, below=of b3] (b4)
    {\textbf{SepConv block 4}\\
     $64\rightarrow80$, stride $2$};

\node[mainblock, below=of b4] (pool)
    {\textbf{Global aggregation}\\
     Adaptive average pooling to $1\times1$};

\node[featureblock, below=of pool] (projection)
    {\textbf{Projection head}\\
     Flatten $\rightarrow$ Linear$(80,128)$ $\rightarrow$ SiLU\\
     $\rightarrow$ Dropout $\rightarrow$
     Linear$(128,z_f)$ $\rightarrow$ LayerNorm};

\node[outputblock, below=of projection] (output)
    {\textbf{Classification head}\\
     $10$-class prediction logits};

\draw[flowarrow] (input) -- (stem);
\draw[flowarrow] (stem) -- (b1);
\draw[flowarrow] (b1) -- (b2);
\draw[flowarrow] (b2) -- (b3);
\draw[flowarrow] (b3) -- (b4);
\draw[flowarrow] (b4) -- (pool);
\draw[flowarrow] (pool) -- (projection);
\draw[flowarrow] (projection) -- (output);

\node[
    detailblock,
    right=4.5mm of b2.east,
    anchor=west
] (sepdetail) {
    \centering
    \textbf{SepConv block}\\[-0.1em]
    \rule{2.25cm}{0.35pt}
    \par\raggedright
    \vspace{0.25em}
    Depthwise Conv$_{3\times3}$\\
    groups $=C_{\mathrm{in}}$
    \vspace{0.2em}

    $\downarrow$\quad BatchNorm + SiLU
    \vspace{0.2em}

    $\downarrow$\quad Pointwise Conv$_{1\times1}$
    \vspace{0.2em}

    $\downarrow$\quad BatchNorm + SiLU
};

\draw[detailarrow]
    (b2.east) -- (sepdetail.west);

\node[
    stagebox,
    fit=(stem)(b1)(b2)(b3)(b4),
    label={[font=\scriptsize\bfseries, text=black!60, anchor=north west]north west:}
] (encoderbox) {};

\node[
    stagebox,
    fit=(pool)(projection),
    label={[font=\scriptsize\bfseries, text=black!60, anchor=north west]north west:}
] (featurebox) {};

\end{tikzpicture}

\caption{Architecture of the MNIST and Fashion-MNIST Image Classifier.}
\label{fig:noisy_mnist_classifier}
\end{figure}

\subsection{Image Denoising }
\label{app:Image Denoising}

In our Image denoising experiment, we utilize an interpolation-based signal model,
parameterizing \eqref{eq:additive} with $ \alpha = 1 - \eta$ and $\beta = \eta$, resulting in
\begin{equation}
x_0 = (1-\eta)x_1 + \eta \nu,
\label{eq:interpolation}
\end{equation}
where $\nu \sim \mathcal{N}(0, I)$ denotes standard Gaussian noise and  $\eta \in [0, 1]$ is a scaling  parameter.

For this experiment, we utilize the MNIST \cite{deng2012mnist} and Fashion-MNIST \cite{xiao2017fashion} datasets. We first train a discriminative classifier model to predict the image label and subsequently train a model with the proposed \ac{DFM} and \ac{CFM}-based baselines.

\subsubsection{Image Classifier Architecture:} The model employs a lightweight convolutional encoder that maps an input degraded image $x_0 \in \mathbb{R}^{1 \times 28 \times 28}$ to a latent representation $z_f \in \mathbb{R}^{e_f}$ and outputs corresponding $10$-class prediction logits. As illustrated in Fig.~\ref{fig:noisy_mnist_classifier}, the encoder begins with a convolutional layer consisting of a $3 \times 3$ convolution that increase the input channels from $1$ to $24$, followed by batch normalization and a SiLU activation. The resulting representation is processed by four depthwise-separable convolutional blocks with channel dimensions $24 \to 32 \to 48 \to 64 \to 80$. Alternating strides of $1$ and $2$ progressively reduce the spatial resolution while increasing the channel capacity. Each block comprises a $3 \times 3$ depthwise convolution, batch normalization, and SiLU activation, followed by a $1 \times 1$ pointwise convolution, batch normalization, and SiLU activation. After global adaptive average pooling, the obtained features are flattened and passed through a projection head consisting of a linear layer from $80$ to $128$ dimensions, a SiLU activation, dropout, a final linear projection to the latent representation $z \in \mathbb{R}^{e_f}$ dimensions, and a separate linear classification head maps  $z_f$ to the $10$-class prediction logits. The model has in total $41.5$k parameters.

\subsubsection{Classifier Training}We train the image classifier for $16$ epochs using a batch size of $128$, a learning rate of $10^{-3}$, a weight decay of $10^{-4}$, and the Adam optimizer. The models are trained independently on the MNIST and Fashion-MNIST datasets.

\subsubsection{Results}The classifier achieves an accuracy of $75.7\%$ on the MNIST test set and $68.9\%$ on the Fashion-MNIST test set, evaluated on the generated degraded samples using the signal model described in \eqref{eq:interpolation}. We further analyze the properties of the discriminative latent features $z_f$ on the MNIST test set. As observed in Figure~\ref{LatentImage} (a), the latent representation exhibits clear semantic structure corresponding to class labels. We also observe another prominent cluster representing heavily degraded samples, where the model struggles to extract sufficient structure. This intuitively suggests that the discriminative latent representation preserves the Flow-State, smoothly converging toward the semantic structure from the noisy observation as degradation decreases. Similarly, Figure~\ref{LatentImage}(b) illustrates that the MMD distance between the latent features and their corresponding class-wise clean features decreases as the degradation scaling factor $\eta$ is reduced, resulting in the discriminative latent features converging toward the clean feature manifold. This experiment validates our discriminative Flow-State hypothesis using a classification model for the image denoising task.

\begin{figure}[!t]
\centering
\includegraphics[width=0.9\columnwidth]{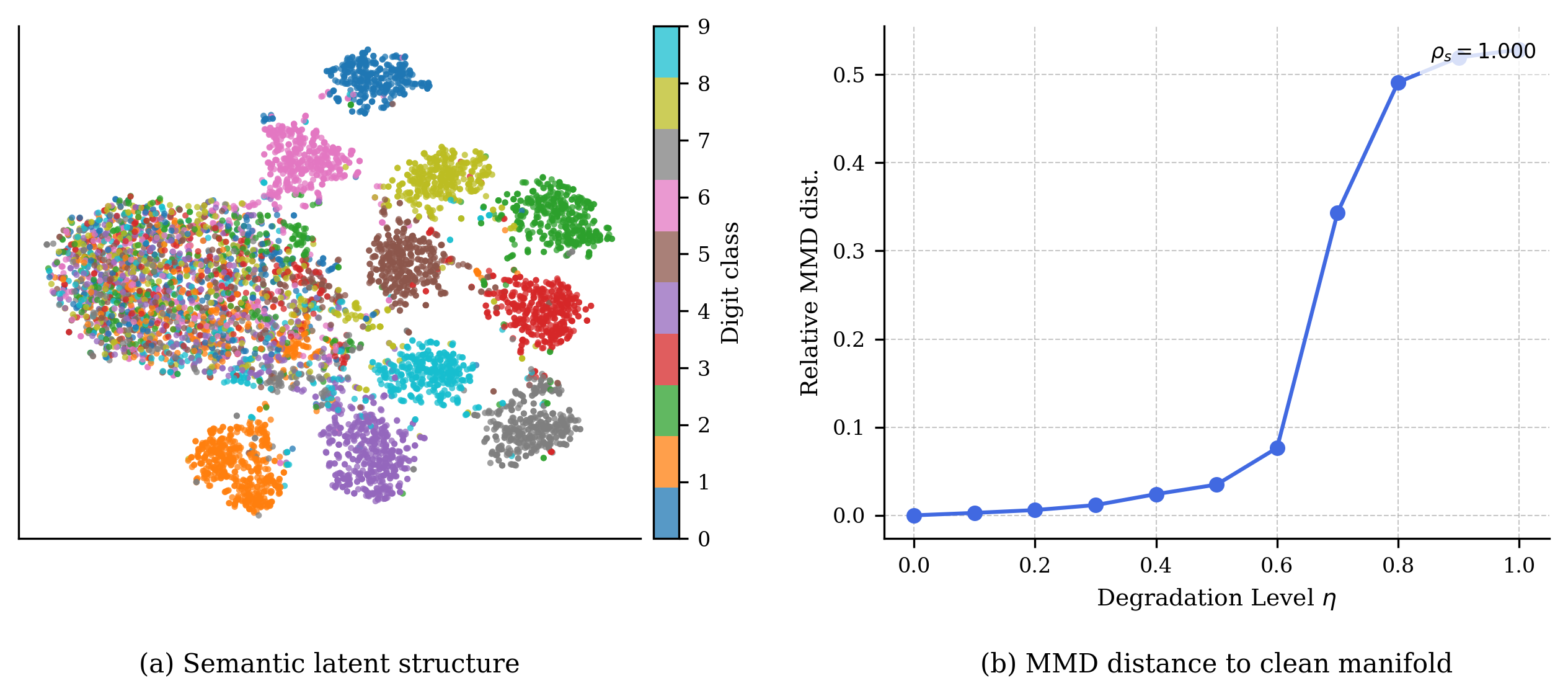}
\caption{Semantic latent structure and latent restoration geometry.}
\label{LatentImage}
\end{figure}

\subsubsection{Discriminative Flow Matching for Image Denoising}

As described in Section \ref{sec:method}, the conditional probability path is defined by
$$\mu_{\zeta}
=\zeta x_1+(1-\zeta)x_0,\qquad
x_{\zeta}= \mu_{\zeta} + \sigma_\zeta\epsilon
$$
and the \ac{DFM} training objective is formulated as
$$
\mathcal{L}_{\mathrm{DFM}}(\theta)
=\mathbb{E}\left[\left\|
v_\theta(x_{\zeta},z,x_0)-v^\star \right\|_2^2\right],$$
where $x_0$ is the initial observation of a degraded image and  $z=\mathcal{C}_{\psi}(x_{\zeta}),
$ is the Discriminative Flow-State Representation derived from the image classifier $\mathcal{C}_{\psi}$, parameterized by $\psi$.

\subsubsection{Velocity Network Architecture}
The velocity network takes the intermediate state $x_\zeta$ and degraded image $x_0$, with $[x_\zeta, x_0] \in \mathbb{R}^{2 \times 28 \times 28}$ as input, and approximates the velocity field $v^*$. The network first applies a $3 \times 3$ input convolution that projects the two-channel input to a hidden-dimensional feature map $h$. In parallel, a conditioning feature $c$, which may denote time $t$, class labels, or \ac{DFSR} $z$, is transformed by a linear projection as shown in Fig.~\ref{fig:dfm_architecture}. The resulting embedding is  added to the intermediate feature representation $h$. The conditioned features are then processed by two residual convolutional blocks, each consisting of alternating SiLU activations and $3 \times 3$ convolutions with an identity skip connection. Finally, a SiLU activation followed by a $3 \times 3$ output convolution projects the hidden representation to the  vector field $v^*$. For keeping the parameter counts comparable, for the \ac{DFM} model we use a feature dimension $64$ for hidden feature space $h$, instead of a $80$ dimensional feature space used for other baselines, namely \ac{CFM}, \ac{CFM}+Class cond. and  \ac{CFM}+Latent Cond. This results in a total of
$232$k  parameters including the overhead parameters of the image classifier for \ac{DFM} compared to $255$-k parameters for the baseline \ac{CFM}.

\begin{figure}[!t]
\centering
\begin{tikzpicture}[
    font=\footnotesize,
    node distance=2.8mm,
    block/.style={
        draw=black!65,
        line width=0.55pt,
        rounded corners=2pt,
        fill=white,
        align=center,
        text width=4.0cm,
        minimum height=6.5mm,
        inner sep=3pt
    },
    inputblock/.style={
        block,
        fill=green!8,
        draw=green!40!black
    },
    condblock/.style={
        block,
        fill=orange!8,
        draw=orange!60!black,
        text width=2.8cm
    },
    outputblock/.style={
        block,
        fill=blue!8,
        draw=blue!50!black
    },
    injectblock/.style={
        block,
        fill=purple!8,
        draw=purple!50!black
    },
    stagebox/.style={
        draw=black!30,
        dashed,
        rounded corners=3pt,
        inner sep=3.5pt
    },
    flowarrow/.style={
        -{Stealth[length=2.0mm,width=1.4mm]},
        line width=0.7pt,
        draw=black!75
    },
    condarrow/.style={
        -{Stealth[length=1.6mm,width=1.2mm]},
        dashed,
        line width=0.55pt,
        draw=purple!60!black
    }
]

\node[inputblock] (input)
    {\textbf{Inputs ($x_\zeta, x_0$)}  $\in \mathbb{R}^{2 \times 28 \times 28}$};

\node[block, below=of input] (inconv)
    {\textbf{Convlutional Layer}\\
     $\operatorname{Conv}_{3\times3}(2, \text{hidden})$};

\node[injectblock, below=of inconv] (inject)
    {\textbf{Conditioning}\\
     $h \leftarrow h + \text{MLP}(c)$};

\node[block, below=of inject] (block1)
    {\textbf{Residual Block 1}\\
     SiLU $\rightarrow$ Conv$_{3\times3}$ $\rightarrow$ SiLU $\rightarrow$ Conv$_{3\times3}$};

\node[block, below=of block1] (block2)
    {\textbf{Residual Block 2}\\
     SiLU $\rightarrow$ Conv$_{3\times3}$ $\rightarrow$ SiLU $\rightarrow$ Conv$_{3\times3}$};

\node[outputblock, below=of block2] (outconv)
    {\textbf{Output Projection}\\
     SiLU $\rightarrow$ $\operatorname{Conv}_{3\times3}(\text{hidden}, 1)$\\
     velocity $v^*$};

\draw[flowarrow] (input) -- (inconv);
\draw[flowarrow] (inconv) -- (inject);
\draw[flowarrow] (inject) -- (block1);
\draw[flowarrow] (block1) -- (block2);
\draw[flowarrow] (block2) -- (outconv);

\node[condblock, right=7mm of inject.east, anchor=west] (condproj)
    {\textbf{Conditioning MLP}\\
     $\text{Linear} \rightarrow \text{SiLU} \rightarrow \text{Linear}$};

\node[condblock, above=of condproj] (condin)
    {\textbf{Conditioning Input ($c$)}\\
     Time $t$, Label, or Disc. Features $z$};

\draw[flowarrow] (condin) -- (condproj);
\draw[condarrow] (condproj.west) -- (inject.east);

\node[
    stagebox,
    fit=(inconv)(inject)(block1)(block2),
    label={[font=\scriptsize\bfseries, text=black!60, anchor=south west]north west:}
] (mainbox) {};

\node[
    stagebox,
    fit=(condin)(condproj),
    label={[font=\scriptsize\bfseries, text=black!60, anchor=south west]north west:}
] (condbox) {};

\end{tikzpicture}

\caption{Architecture of the velocity network for image denoising task.}
\label{fig:dfm_architecture}
\end{figure}

\subsubsection{Training Details} We train \ac{DFM} and all baseline models for $16$ epochs using a batch size of $128$, a learning rate of $2 \times 10^{-4}$, a weight decay of $10^{-4}$, and the Adam optimizer. The models are trained independently on the MNIST and Fashion-MNIST datasets.

\subsubsection{Results} As shown in Section~\ref{Image Denoising}, \ac{DFM} consistently outperforms the baseline methods across all degradation levels for both the MNIST and Fashion-MNIST datasets. We further evaluate adaptive inference for the image denoising task. To this end, we train a small \ac{DFM} interpolation coordinate $\zeta$ predictor comprising only three linear layers. We employ the adaptive inference algorithm discussed in Appendix~\ref{app:adaptive_inference} using a uniform Euler solver. As illustrated in Figure~\ref{AdaptiveImage}, adaptive inference performs effectively for image denoising: the algorithm assigns a higher number of function evaluations (\ac{NFE}) to heavily degraded samples and a lower \ac{NFE} to less degraded samples. Overall, it matches the performance of the fixed budget scheduler ($\ac{NFE} = 50$) while significantly reducing the average computational budget.

\begin{figure}[!t]
\centering
\includegraphics[width=0.9\columnwidth]{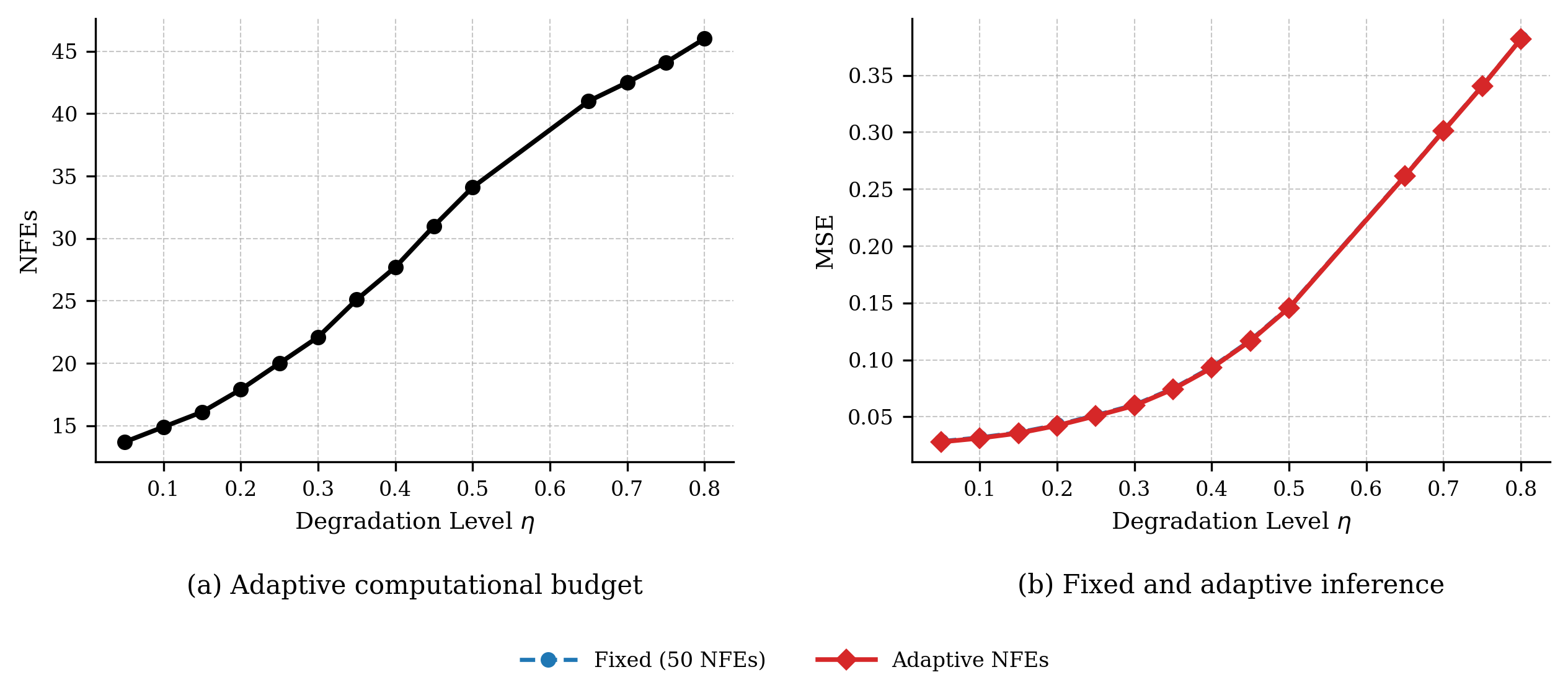}
\caption{Adaptive inference with uniform Euler solver for image denoising on the MNIST dataset.}
\label{AdaptiveImage}
\end{figure}

\subsection{Statistical Analysis of the Results}
\label{sec:statisticalanalyis}

\begin{figure}[!t]
\centering
\includegraphics[width=\columnwidth]{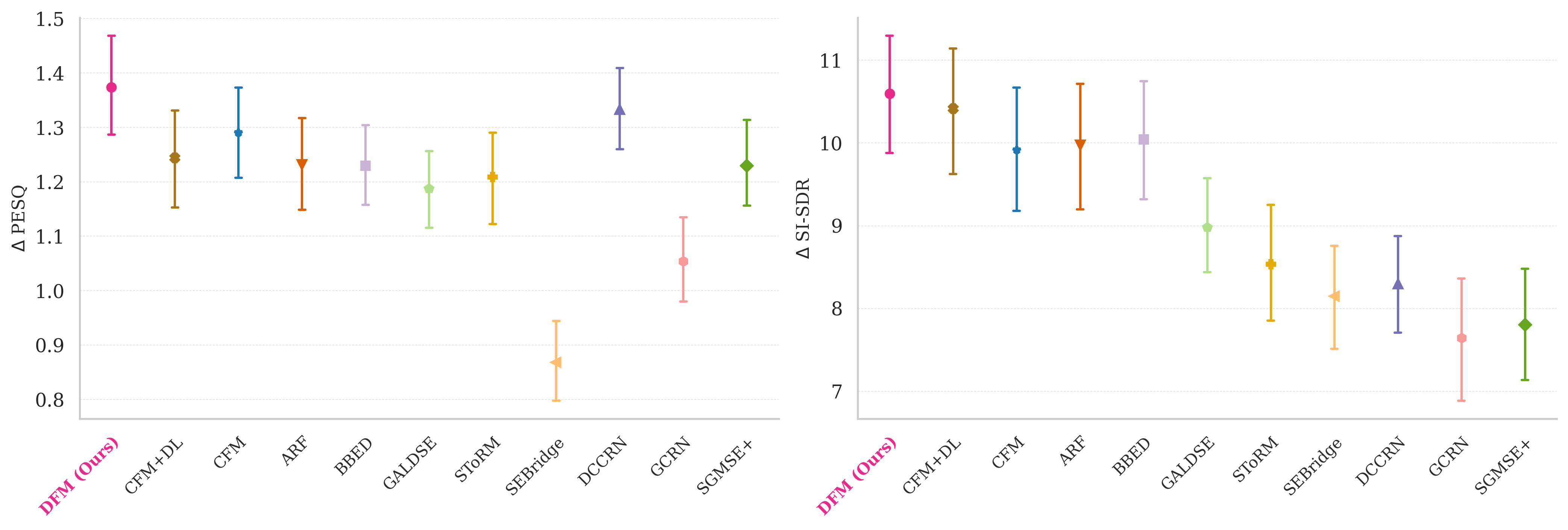}
\caption{PESQ and SI-SDR improvements along with their respective $95\%$ confidence intervals for the compared methods presented in Table~\ref{tab:DNSnoreverb1}.}
\label{fig:Boxplot}
\end{figure}

\begin{table*}[t]
\centering
\begin{tabular}{l c c c c}
\toprule
\textbf{Baseline Model} & \multicolumn{2}{c}{\textbf{PESQ}} & \multicolumn{2}{c}{\textbf{SI-SDR (dB)}} \\
\cmidrule(lr){2-3} \cmidrule(lr){4-5}
& \textbf{Baseline Mean} & \textbf{$p$-value} & \textbf{Baseline Mean} & \textbf{$p$-value} \\
\midrule
DCCRN & $2.91$ & $0.300$ & $17.36$ & $< 0.001$ \\

GCRN & $2.63$ & $< 0.001$ & $16.71$ & $< 0.001$ \\
BBED & $2.81$ & $< 0.001$ & $19.10$ & $< 0.001$ \\

SGMSE+ & $2.81$ & $< 0.001$ & $16.86$ & $< 0.001$ \\
SToRM & $2.70$ & $< 0.001$ & $17.56$ & $< 0.001$ \\
GALDSE & $2.77$ & $< 0.001$ & $18.04$ & $< 0.001$ \\
SEBridge & $2.45$ & $< 0.001$ & $17.21$ & $< 0.001$ \\

CFM & $2.86$ & $< 0.001$ & $18.99$ & $< 0.001$ \\
CFM+DL & $2.87$ & $< 0.001$ & $19.47$ & $< 0.001$ \\

ARF & $2.82$ & $< 0.001$ & $19.04$ & $< 0.001$ \\

\bottomrule
\end{tabular}
\caption{Statistical significance test results (paired Wilcoxon signed-rank test) comparing our proposed \ac{DFM} method (PESQ: $2.96$, SI-SDR: $19.63\text{ dB}$) against various baseline models.}
\label{tab:statistical_summary}
\end{table*}

We further validate the performance gains of the proposed \ac{DFM} method presented in Table \ref{tab:DNSnoreverb1} , with paired  Wilcoxon signed-rank tests against all baseline models. The statistical analysis reveals, in terms of SI-SDR, \ac{DFM}  achieves consistent improvements ($p$-value<0.001) over every competing model. In terms of PESQ, \ac{DFM}   likewise demonstrates statistically significant superiority ($p$-value<0.001) against all baselines except DCCRN. However, as stated earlier, DCCRN performs much inferiorly in terms of SI-SDR. Overall, the statistical evaluation confirms that \ac{DFM} 's performance enhancements are consistent, non-random, and highly robust across evaluation metrics. In Fig. \ref{fig:Boxplot}, we show the PESQ and SI-SDR improvements of the competing methods on the DNS Challenge non-reverberant test set presented in Table \ref{tab:DNSnoreverb1}, along with their corresponding confidence intervals, illustrating the consistent and robust performance of \ac{DFM}.

\end{document}